\documentclass[lettersize,journal]{IEEEtran}
\usepackage{graphicx}%
\usepackage{multirow}%

\usepackage{amsthm}%
\usepackage{amsmath}
\usepackage{xcolor}%
\usepackage{textcomp}%
\usepackage{algorithmicx}
\usepackage{algorithm}%
\usepackage{algpseudocode}%

\usepackage{tikz}
\usepackage{url}
\usepackage{subfigure}
\usepackage[normalem]{ulem}
\usepackage{fancyhdr}

\newtheorem{theorem}{Theorem}[section]

\newtheorem{definition}{Definition}

\newtheorem{example}{Example}
\newtheorem{property}{Property}[section]

\begin{document}

\title{Efficient discovery of unique column combinations on disk-resident data with limited memory}

\author{Xiaolong~Wan, Xixian~Han
\thanks{The authors are with School of Computer Science and Technology, Harbin Institute of Technology, China. (e-mail: wxl@hit.edu.cn, hanxx@hit.edu.cn)}
\thanks{}}

\markboth{}%
{Shell \MakeLowercase{\textit{et al.}}: A Sample Article Using IEEEtran.cls for IEEE Journals}


\maketitle

\begin{abstract}
The discovery of unique column combinations (UCCs) is a core task in data profiling, describing the key constraints of a table. The existing algorithms cannot deal with large-scale disk-resident data well due to high memory consumption and computational cost. In this paper, a novel DUD algorithm is developed to efficiently discover UCCs on disk-resident data with limited memory, which is inspired by the relationship between UCC discovery and transversal hypergraph. Rather than complete difference set generation of quadratic complexity, DUD only generates partial difference sets for hypergraph construction, followed by minimal hitting set enumeration to generate candidates and a validation process. DUD devises a strategy to generate full useful difference sets by pairwise comparisons of tuples having the same values with respect to some selected attributes. A novel theorem is developed and proved in this paper to report the candidates including the selected attributes as true UCCs directly without validation, which reduces the number of candidates to be validated significantly. A hash-based batch validation strategy is devised to validate a set of candidates on the relation instance, which only needs to maintain a small number of tuples in memory at a time. The extensive experimental results, conducted on synthetic and real-life data sets, show that DUD can discover UCCs on disk-resident data with high efficiency and low memory consumption.
\end{abstract}

\begin{IEEEkeywords}
	Unique column combination discovery, Disk-resident data, Limited memory, Hypergraph transversal.
\end{IEEEkeywords}

\section{Introduction} \label{sec:introduction}

The metaphors ``data is the new oil'' and ``analytics is the combustion engine'' \cite{DBLP:books/sp/fourthIR} assert that data has become a valuable resource for deriving insights that power modern industries. Before fully exploiting data, the necessary condition is to discover its metadata, which helps understand and process the data to make it truly an asset for decision-making. Data profiling is the systematic process of determining the metadata of data sets, which produces a small but informative summary of a given data set \cite{DBLP:series/synthesis/2018Abedjan}. 

The discovery of unique column combinations (UCCs) is a core task in data profiling, describing the key constraints of a table. Given relation schema \textit{R}, a UCC \textit{X} refers to a set of attributes whose projection only contains unique values in a given relation instance \textit{r}, i.e., any two distinct tuples $t_1, t_2 \in r$, $t_1[X] \neq t_2[X]$. Since UCCs identify tuples uniquely, they can serve as candidate keys, from which the primary key can be selected. Besides key discovery, UCCs also have many important practical cases including query optimization \cite{DBLP:journals/vldb/KossmannPN22}, data cleansing \cite{DBLP:books/acm/IlyasC19}, data integration \cite{DBLP:books/daglib/0029346}, database reverse engineering \cite{DBLP:journals/sigmod/Naumann13}. Due to various reasons, for example, undefined, missing or outdated, the UCC information is often not annotated in the data set, which motivates the automatic UCC discovery problem. Since any superset of a UCC is also a UCC, an important property for UCC is its minimality. A UCC \textit{X} is called minimal if none of its proper subsets is a UCC. The UCC discovery problem is to determine all minimal UCCs of a relation instance, which are sufficient to infer all UCCs. 

\begin{table}[t]
	\centering
	\caption{PART data records.}
	\scriptsize
	\begin{tabular}{cccccc}
		\hline
		PKEY & NAME & MFGR & BRAND & TYPE & PRICE \\
		\hline
		1 & Golden & M1 & B13 & PROMO & 901 \\
		2 & Blush & M1 & B13 & LARGE & 902\\
		3 & Spring & M4 & B42 & STAND & 903 \\
		4 & Cornfl & M3 & B34 & SMALL & 904 \\
		5 & Forest & M3 & B32 & STAND & 905 \\
		6 & Bisque & M2 & B24 & PROMO & 906 \\
		7 & Moccas & M1 & B11 & SMALL & 907 \\
		8 & Misty & M4 & B44 & PROMO & 908 \\
		9 & Thist & M4 & B43 & SMALL & 909 \\
		10 & Linen & M5 & B54 & LARGE & 910 \\
		\hline
	\end{tabular}
	\label{table:stuRecords}
\end{table}

\begin{example}
	Given a part table (Table \ref{table:stuRecords}) of a manufactory, each tuple keeps the information of a part, including part key (PKEY), part name (NAME), manufacturer (MFGR), brand (BRAND), type (TYPE) and retail price (PRICE). For the given relation instance, the discovered minimal UCCs are $\{$PKEY$\}$, $\{$NAME$\}$, $\{$PRICE$\}$, $\{$MFGR, TYPE$\}$, $\{$BRAND, TYPE$\}$.
\end{example}

Because of its practical importance, the UCC discovery problem has attracted extensive attention from researchers in both academia and industry \cite{DBLP:conf/cikm/AbedjanN11,DBLP:journals/pvldb/BirnickBFNPS20,DBLP:journals/pvldb/HeiseQAJN13,DBLP:conf/btw/PapenbrockN17,DBLP:conf/vldb/SismanisBHR06}. Actually, the discovery of UCC is a rather hard problem since the number of total UCC candidates is exponential to the number of attributes. The existing algorithms can be divided into three classes: attribute-based approaches, tuple-based approaches and hybrid approaches, which optimize the discovery process in different aspects. Optimizing for data sets with many tuples, attribute-based approaches \cite{DBLP:conf/cikm/AbedjanN11,DBLP:journals/pvldb/HeiseQAJN13} validate the candidates one after another, which are generated through attribute lattice. Aiming at the data sets with many attributes, tuple-based approaches \cite{DBLP:conf/vldb/SismanisBHR06} first compare the tuples pairwise and then derive all minimal UCCs by the comparison results. Hybrid approaches \cite{DBLP:journals/pvldb/BirnickBFNPS20,DBLP:conf/btw/PapenbrockN17} combine the advantages of the first two approaches and switch between two approaches for better performance, optimizing for data sets with many tuples and many attributes simultaneously.

The existing algorithms assume that the required data structures can be held entirely in memory. Their executions depend on some in-memory data structures, e.g. prefix tree or stripped partitions. However, the assumption fails in many real-life situations since the memory is limited for a specified machine and often is allocated partly to a process. On large-scale disk-resident data, the existing algorithms incur high execution cost, even cannot work when the required data structures cannot be maintained entirely in memory. \textit{Considering the ever-growing data volume, how to efficiently discover UCCs on disk-resident data under limited memory is still an important and open problem}. 

This paper solves UCC discovery problem inspired by the equivalency under parsimonious reduction between UCC discovery and transversal hypergraph \cite{DBLP:journals/tcs/BlasiusFS22}. Given relation instance, its UCCs are the minimal hitting sets of the hypergraph \cite{DBLP:journals/jcss/BlasiusFLMS22}, whose vertices are attributes of relation schema and the hyperedges are the minimal difference sets generated by the complete pairwise comparisons of tuples. For a pair of tuples, the \textit{difference set} is the set of attributes on which the two tuples have different values. A candidate attribute combination fails to be a UCC exactly when some pair of tuples agrees on every attribute in the candidate, that is, when the candidate shares no attribute with the difference set of that pair.

A novel DUD algorithm (\textit{D}isk-resident \textit{U}nique column combination \textit{D}iscovery) is proposed in this paper to efficiently discover all minimal UCCs on disk-resident data with limited memory. The execution of DUD consists of three phases. Since the cost of complete pairwise comparisons is prohibitively high on large-scale data, DUD first performs a subset of pairwise comparisons to build the hypergraph with partial information (phase 1). Next DUD discovers the UCC candidates by enumerating minimal hitting sets on the built hypergraph (phase 2). Then the candidates are checked by a validation process (phase 3). The candidates checked to be correct are valid UCCs, while the validation process of invalid candidates brings some \textit{conflicting tuples}, which have the same projection values on the invalid candidates. The hypergraph is updated by adding new hyperedges, i.e., the difference sets generated by comparing the conflicting tuples pairwise, and DUD enumerates minimal hitting sets again on the updated hypergraph. The following processing is similar, with the only difference that the newly generated candidates already discovered to be UCCs in the previous operations should be removed. If no new UCC candidates are generated or no new candidates are validated to be invalid, the execution of DUD is over. 

It is observed that the enumeration of minimal hitting sets is efficient in a hypergraph stemming from real-life data sets \cite{DBLP:journals/jcss/BlasiusFLMS22}. The emphasis of DUD lies on phase 1 and phase 3, i.e., efficient generation of partial difference sets and efficient validation of UCC candidates.

\textit{Emphasis 1: efficient generation of partial difference sets}. How to acquire a subset of difference sets, which can be generated with limited memory in a reasonable time and also be helpful for efficient UCC discovery, is a non-trivial issue. A difference set spanning all attributes provides no useful information for UCC discovery, since every non-empty candidate would trivially intersect it. The \textit{useful difference sets} for minimal hitting set enumeration, i.e., those that exclude at least one attribute, are generated by pairwise comparisons of tuples with at least one equal attribute value. An important theorem is proved in this paper that, \textit{if all useful difference sets with respect to an attribute are generated, any UCC candidates including the attribute are true UCCs definitely without validation}. This actually introduces a trade-off between partial difference set generation and validation process. This paper sets an upper-bound for the number of full useful pairwise comparisons with respect to each attribute in order to restrict the cost of generating difference sets, and the attributes are divided into two types: attributes of full difference set generation type (FG type) and attributes of no difference set generation type (NG type). DUD only generates full useful difference sets with respect to attributes of FG type, whose comparison numbers are no more than the upper-bound. Then, random sampling is performed on relation instance to supplement difference sets, whose pairwise comparison number is linear to tuple number.

\textit{Emphasis 2: efficient validation of UCC candidates}. The key to efficient validation with limited memory is not to maintain all tuples in memory. In this paper, a hash-based strategy is devised for validating the UCC candidates on the relation instance directly. Two tuples with the same projection on any candidate must have the same value for each constituent attribute. Therefore, the collision can be checked by first sorting the relation instance with respect to one constituent attribute and only maintaining the tuples with the same value of the constituent attribute in memory. The remaining \textit{validation candidates} contain only NG attributes. Therefore, DUD accesses NG attributes in the ascending order of maximum run lengths of equal attribute values to minimize the number of tuples kept in memory. A set of candidates sharing current attribute of NG type are validated simultaneously on the same sorted relation instance. The candidates checked to be correct are valid UCCs, while the validation process of invalid candidates brings some conflicting tuples, which are used to update the hypergraph.

The correctness of DUD is proved in this paper, along with its cost analysis. The extensive experiments are conducted on synthetic and real-life data sets, representing the practical data characteristics comprehensively. The experimental results show that DUD achieves remarkable efficiency on disk-resident data with limited memory.

The contributions of this paper are listed as follows:
\begin{itemize}
	\item[-] This paper proposes a novel DUD algorithm for efficient UCC discovery on disk-resident data.
	\item[-] The efficient generation of partial difference sets is devised in this paper, along with a novel theorem which prunes UCC candidates significantly.
	\item[-] The hash-based validation strategy is presented to validate candidates efficiently, by maintaining a small part of tuples in memory for collision checking.
	\item[-] The extensive experimental results show that DUD can discover UCCs on disk-resident data with high efficiency and low memory consumption.
\end{itemize}

The rest of the paper is organized as follows. Section \ref{sec:preliminary} gives preliminaries, followed by the related works in Section \ref{sec:relatedworks}. DUD algorithm is introduced in Section \ref{sec:dud}. Section \ref{sec:experiments} evaluates the performance of DUD and Section \ref{sec:conclusion} concludes the paper.

\section{Preliminaries} \label{sec:preliminary}

Given relation schema $R(A_1, A_2, \ldots, A_m)$, where $A_i (1 \le i \le m)$ is an attribute of \textit{R}, let \textit{r} be an instance of \textit{R} with \textit{n} tuples. Given $t \in r$, $\forall A \in R$, $t[A]$ represents the value of \textit{A} in \textit{t}, and $\forall X \subseteq R$, $t[X]$ denotes the projection of \textit{t} on \textit{X}. 

$\forall t_1, t_2 \in r$, if $t_1[X] = t_2[X]$, it means $t_1[A] = t_2[A]$ for $\forall A \in X$ and the projection of \textit{r} on \textit{X} contains duplicates. If $t_1[X] \neq t_2[X]$, it means $\exists A \in X, t_1[A] \neq t_2[A]$. Note that \textit{two tuples with different subscripts (e.g. $t_1, t_2 \in r$) correspond to two distinct tuples in this paper}. Given a subset \textit{X} of \textit{R}, we say \textit{X} is a unique column combination (UCC) if the projection of the relation instance \textit{r} on \textit{X} does not contain duplicates.

\begin{definition} \label{definition:ucc}
	Given relation instance \textit{r} of schema \textit{R}, an attribute subset \textit{X} ($X \subseteq R$) is a UCC, if and only if, $\forall t_1, t_2 \in r$, $t_1[X] \neq t_2[X]$.
\end{definition}

For a UCC \textit{X}, any superset of X is also a UCC. A UCC is called (inclusion-wise) minimal, if it does not contain any other UCC properly. 

\begin{definition} \label{definition:minucc}
	Given a UCC \textit{X}, it is minimal if and only if, $\forall Y \subset X$, \textit{Y} is a non-UCC, i.e., $\exists t_1, t_2 \in r$, $t_1[Y] = t_2[Y]$.
\end{definition}

Accordingly, the UCC discovery only returns minimal UCCs, and other UCCs can be derived naturally.

\begin{definition} \label{definition:uccdiscovery}
	Given relation instance \textit{r} of schema \textit{R}, the UCC discovery is to find all minimal UCCs in \textit{r}.
\end{definition}

\begin{figure}
	\centering
	\includegraphics[scale = 0.34]{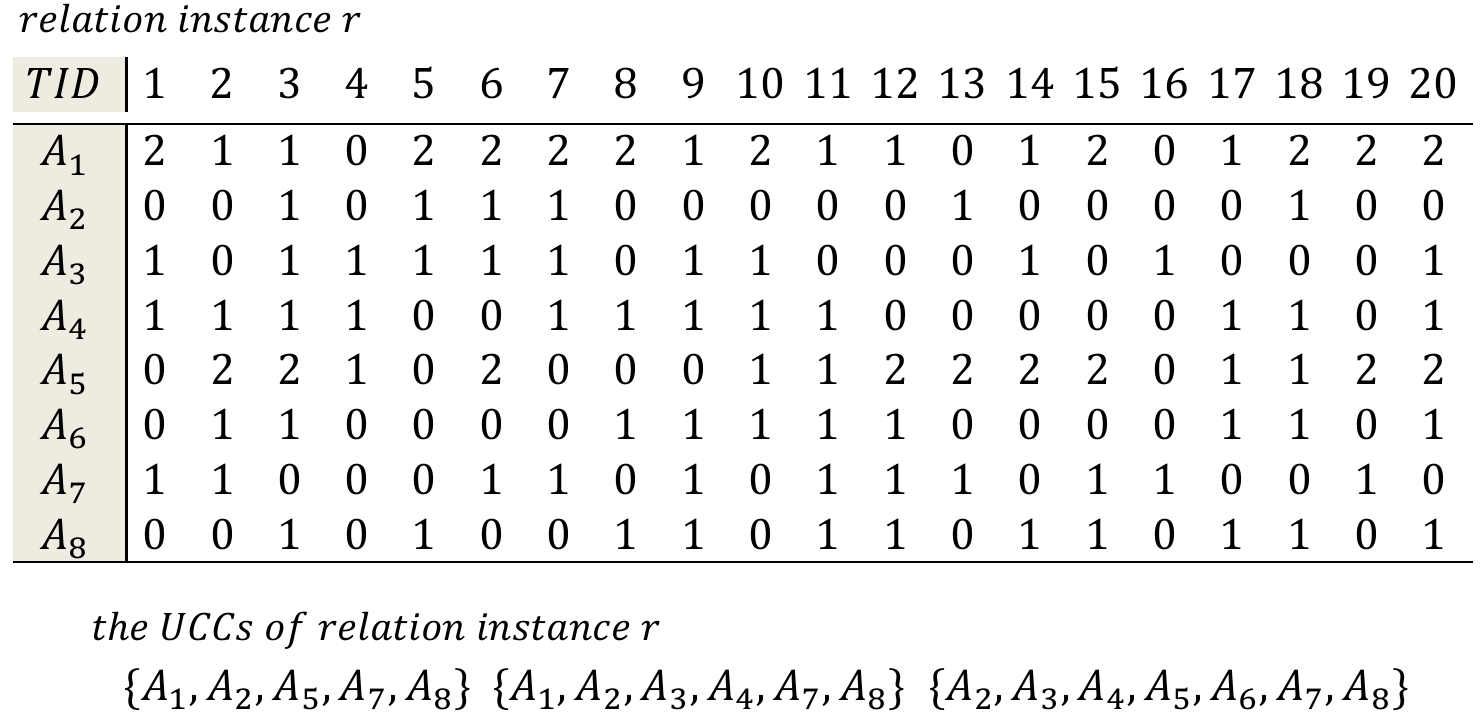}
	\caption{The illustration of relation instance in the running example.}
	\label{fig:RowTable}
\end{figure}

\begin{example}
	In this paper, we use a running example, as depicted in Figure \ref{fig:RowTable}, to illustrate the algorithm execution. The relation schema contains eight attributes $\{A_1, A_2, \ldots, A_8\}$. Figure \ref{fig:RowTable} also provides 3 UCCs of relation instance \textit{r}.
\end{example}

Below we introduce hypergraph and its intimate connection with UCC discovery. Many different data dependencies are discovered through enumerating hitting sets of hypergraphs \cite{DBLP:journals/pacmmod/BleifussPBSN24,DBLP:conf/icde/XiaoYTMW22}.

\textit{Hypergraph}. A hypergraph generalizes the concept of a graph, where each edge can connect multiple vertices rather than exactly two as in a conventional graph. Formally, a hypergraph $\mathcal{H}$ is an ordered pair $(V, E)$, where \textit{V} is a finite vertex set and \textit{E} is a hyperedge set. $\forall e \in E$, $e \in \mathcal{P}(V)$, where $\mathcal{P}(V)$ denotes the power set of \textit{V}. Given a hypergraph $\mathcal{H}$, a hitting set \textit{h} is a subset ($h \subseteq V$) of \textit{V} such that \textit{h} has at least one common element with every hyperedge, i.e., $\forall e \in E$, $h \cap e \neq \emptyset$. If \textit{h} does not include any other hitting set, it is called minimal. The minimal hitting sets of $\mathcal{H}$ form the transversal hypergraph $Tr(\mathcal{H})$. Given a hypergraph $\mathcal{H} = (V, E)$, its minimization $min(\mathcal{H})$ is a hypergraph formed by vertex set \textit{V} and inclusion-wise minimal hyperedges of \textit{E}. It is known that $Tr(\mathcal{H}) = Tr(min(\mathcal{H}))$ \cite{DBLP:journals/jcss/BlasiusFLMS22}.

$\forall t_1, t_2 \in r$, the difference set $DS(t_1, t_2)$ is the subset of \textit{R}, in which $t_1$ and $t_2$ have different attribute values, i.e., $DS(t_1, t_2) = \{A \in R \ | \ t_1[A] \neq t_2[A]\}$. As discussed in Section \ref{sec:introduction}, an attribute set $X$ is a UCC exactly when $X$ has at least one common attribute with $DS(t_1, t_2)$ for every pair of tuples $t_1, t_2 \in r$. It is already proved that, given a hypergraph $\mathcal{H}$ using attributes in \textit{R} as vertex set and minimal difference sets generated by all pairwise comparisons of tuples as hyperedge set, the UCC discovery problem is equivalent to transversal hypergraph problem and the minimal UCCs are equal to the minimal hitting sets in $\mathcal{H}$ \cite{DBLP:journals/tcs/BlasiusFS22}.

\section{Related works} \label{sec:relatedworks}

UCC discovery is one component of data profiling and metadata discovery \cite{DBLP:series/synthesis/2018Abedjan,DBLP:journals/sigmod/Naumann13}, a field concerned with inferring structural properties of datasets. Related problems include functional dependency (FD) discovery \cite{DBLP:journals/tkde/WanHWL24}, denial constraint discovery \cite{DBLP:journals/pvldb/ChuIP13}, and inclusion dependency discovery \cite{DBLP:journals/pvldb/PapenbrockKQN15}. Handling data that cannot be held entirely in memory is a common challenge across these problems. Scalable methods for FD discovery on massive data have also been developed \cite{DBLP:journals/tkde/WanHWL24}. Beyond relational tables, structural discovery has also been studied for semi-structured data, including schema inference for JSON \cite{DBLP:conf/edbt/BaaziziLCGS17} and FD discovery for XML \cite{DBLP:journals/vldb/YuJ08}. This paper focuses on UCC discovery, which finds the minimal column combinations that uniquely identify tuples in a relation. The existing algorithms can be classified into three categories: attribute-based approaches, tuple-based approaches, and hybrid approaches.

\textit{Attribute-based approaches}. The search space is modeled as an attribute lattice and each node corresponds to a UCC candidate. The candidates are enumerated in a systematic strategy and validated by stripped partitions \cite{DBLP:journals/cj/HuhtalaKPT99} to determine the true UCCs. HCA algorithm \cite{DBLP:conf/cikm/AbedjanN11} traverses the attribute lattice in bottom-top breadth-first strategy and generates the UCC candidates in a similar way as Apriori \cite{DBLP:conf/vldb/AgrawalS94}. The minimality pruning is used to reduce the search space. HCA validates the generated candidates by stripped partitions. Due to the relatively high validation cost, HCA utilizes data-based and statistics-based pruning to optimize the validation operation. DUCC algorithm \cite{DBLP:journals/pvldb/HeiseQAJN13} models the UCC discovery problem as a graph coloring problem. By traversing the attribute lattice in a depth-first and random walk, DUCC can quickly approach the border between UCCs and non-UCCs. The pruning strategy can be performed in a bottom-up and top-down manner at the same time. It is observed that the unreachable nodes may exist due to aggressive pruning, and a post-processing is presented to find and remove the unreachable parts. 

\textit{Tuple-based approaches}. All non-UCCs are discovered by first comparing pairs of tuples, and then the minimal UCCs are derived from the non-UCCs. GORDIAN algorithm \cite{DBLP:conf/vldb/SismanisBHR06} transforms the UCC discovery problem into cube computation problem, in which the count computation of the projected entities can be used to verify whether the projection is a UCC. The data set is first compressed into a prefix tree, and then the interleaving of cube computation and all non-UCC discovery is performed. The true UCCs can be computed by complementing the non-UCCs. The powerful pruning techniques are applied to reduce the time and space requirement, utilizing the idea that any subset of non-UCC is also a non-UCC.

\textit{Hybrid approaches}. The advantages of attribute-based approaches and tuple-based approaches are combined to achieve better performance for hybrid approaches. HyUCC algorithm \cite{DBLP:conf/btw/PapenbrockN17} automatically switches between attribute-based search (sampling phase) and tuple-based search (validation phase), depending on which one performs better currently. HyUCC starts with sampling phase, which performs sampling on the tuple pairs by stripped partitions to derive non-UCCs. The sampling efficiency is measured by the number of newly discovered non-UCCs per comparison. When the sampling efficiency falls below a threshold, sampling phase becomes inefficient, then the discovered non-UCCs are used to induce the candidate UCCs and HyUCC switches to validation phase. The validation efficiency is measured by the number of valid UCCs per validation. When the validation efficiency does not reach a threshold, the algorithm switches back to sampling phase with new comparison suggestions. HyUCC relaxes the threshold a bit in every switch and ends with validation phase. HPIValid algorithm \cite{DBLP:journals/pvldb/BirnickBFNPS20} models the UCC discovery problem as hitting set enumeration problem in hypergraph. Instead of comparing all pairs of tuples, HPIValid performs the UCC discovery with partial information only. The algorithm first samples the tuple pairs by stripped partitions to construct a partial hypergraph, on which the minimal hitting sets are enumerated. The newly discovered candidate UCCs are validated via stripped partitions to check whether they are true UCCs. For the invalid UCCs, new comparisons of tuple pairs are suggested to generate new hyperedges in the hypergraph. The algorithm continues until no new candidate UCCs are found by enumerating hitting sets in the updated hypergraph.

UCC discovery is closely related to the hypergraph transversal problem, which has been studied independently in combinatorics \cite{DBLP:journals/jcss/BlasiusFLMS22}. The enumeration of minimal hitting sets is equivalent to enumerating the minimal transversals of a hypergraph. DUD relies on the MMCS algorithm \cite{DBLP:journals/dam/MurakamiU14} for this enumeration step. Blasius et al. \cite{DBLP:journals/jcss/BlasiusFLMS22} show that hypergraphs arising from real-life data tend to have compact structure, making enumeration tractable in practice. Similar formulations have been applied in the discovery of other data dependencies \cite{DBLP:journals/pacmmod/BleifussPBSN24,DBLP:conf/icde/XiaoYTMW22}.

\textit{Discussion}. The existing UCC discovery algorithms assume that the memory is large enough to hold the required data structures entirely. The assumption often fails in practical situations. Prefix tree utilized in GORDIAN algorithm often occupies a larger size than relation instance \cite{DBLP:journals/datamine/HanPYM04} and cannot be held in memory entirely on data set of large scale. The stripped partitions used for attribute-based and hybrid approaches have nearly the same size as relation instance in the compressed representation \cite{DBLP:conf/sigmod/PapenbrockN16,DBLP:journals/tkde/WanHWL24}, and it is rather difficult for stripped partition based algorithms to discover UCCs when the size of stripped partition is larger than the allocated memory. 

The algorithm proposed in this paper follows the idea of hypergraph-based UCC discovery, placing it among the hybrid approaches discussed above. However, unlike these approaches, DUD does not rely on a global per-tuple structure, such as a prefix tree or stripped partitions, whose size scales with the relation instance. Instead DUD operates directly on the tuples of the relation instance throughout all phases. It is different from the existing hybrid approaches essentially in terms of two aspects: partial difference set generation strategy and validation strategy, which also are two most important components in hybrid approaches. 
\begin{itemize}
	\item[-] Instead of generating difference sets by sampled tuple pairs, our algorithm generates full useful difference sets with respect to some attributes. By the generation strategy, our algorithm can report most of the UCC candidates, which satisfy the specified condition, as true UCCs without validation. In contrast, the existing algorithms lack a similar pruning strategy and have to validate all candidates. 
	\item[-] The existing algorithms require the relatively expensive validation for the generated candidates one by one through manipulating stripped partitions. Our algorithm utilizes a hash-based batch validation strategy to validate a set of candidates on relation instance simultaneously by maintaining a small number of tuples in memory at a time.
\end{itemize}
This paper develops the algorithm elaborately, considering both high execution efficiency and the memory constraint throughout the algorithm design.

\section{DUD algorithm} \label{sec:dud}

This section introduces DUD algorithm (\textit{D}isk-resident \textit{U}nique column combination \textit{D}iscovery), which efficiently discovers UCCs on disk-resident data with limited memory without any pre-processing.

\textit{Algorithm overview}. DUD utilizes partial information, i.e., difference sets generated by a subset of tuple pairs, to discover UCCs by enumerating minimal hitting sets on the built hypergraph and validating the generated UCC candidates. In the beginning, DUD builds the \textit{full useful difference sets} with respect to some selected attributes. This provides a significant advantage that DUD does not need to validate each candidate and most candidates can be reported to be UCCs directly. For the validation candidates, DUD develops a hash-based batch validation strategy. By the validation process, DUD refines the built hypergraph if necessary and continues the UCC discovery until all UCCs are reported. 

Specifically, the execution of DUD consists of three phases, whose pseudocode is listed in Algorithm \ref{alg:dud}. In phase 1, DUD computes the difference sets by comparing tuples pairwise and constructs the hypergraph $\mathcal{H}$ (Line \ref{alg1:line:1} in Algorithm \ref{alg:dud}). Note that we do not compare all pairs of tuples in phase 1 but generate the difference sets by comparing a subset of tuple pairs. Then, the minimal hitting sets of $\mathcal{H}$ are enumerated in phase 2 (Line \ref{alg1:line:2} in Algorithm \ref{alg:dud}), which are the candidate UCCs. If the candidates need to be validated, a hash-based validation process is executed in phase 3 without maintaining all tuples in memory (Line \ref{alg1:line:5} in Algorithm \ref{alg:dud}). If all candidates are proved to be true UCCs, the execution is over. Any candidates proved to be non-UCCs indicate some conflicting tuples, which have duplicate projections on the candidates. The hypergraph $\mathcal{H}$ is updated by adding new hyperedges, i.e., newly generated difference sets by comparing the conflicting tuples pairwise (Line \ref{alg1:line:7} in Algorithm \ref{alg:dud}). In this case, the execution returns to phase 2 and enumerates minimal hitting sets again (Line \ref{alg1:line:8} in Algorithm \ref{alg:dud}). The following operation is similar, with the only difference that the newly generated candidates already discovered to be UCCs in the previous operations should be removed (Line \ref{alg1:line:4} in Algorithm \ref{alg:dud}). If no new UCC candidates are generated in phase 2 or no new candidates are validated to be invalid in phase 3, the execution of DUD is over.

\begin{algorithm}[t]
	\renewcommand{\algorithmicrequire}{\textbf{Input:}}
	\renewcommand{\algorithmicensure}{\textbf{Output:}}
	\renewcommand{\algorithmiccomment}[1]{ #1}
	
	\caption{DUD algorithm}
	\label{alg:dud}
	
	\footnotesize
	\begin{algorithmic}[1]
		\Require relation instance \textit{r} of \textit{m} attributes 
		\Ensure The discovered UCCs		
		
		\State $ST_{ucc}$: an empty set to keep the true UCCs
		
		\noindent\Comment{//Phase 1: generate initial difference sets and construct the hypergraph}
		\State Hypergraph $\mathcal{H}$ = BuildInitialHypergraph(\textit{r}) \label{alg1:line:1}
		
		\noindent\Comment{//Phase 2: Call MMCS \cite{DBLP:journals/dam/MurakamiU14} to find minimal hitting set}
		\State The UCC candidate set $ST_c$ = MMCS($\mathcal{H}$)\label{alg1:line:2}
		
		\noindent\Comment{//Check whether new candidates are generated.}
		\State \textbf{while} ($ST_c = ST_c \setminus ST_{ucc}$ is not empty) \textbf{do}\label{alg1:line:4}
		
		\noindent\Comment{//Phase 3: validate candidates by hashing, $\Delta_{DS}$ is a set to maintain difference sets generated by conflicting tuples}
		\State \quad $\Delta_{DS}$ = ValidateByHash($ST_c$, $ST_{ucc}$) \label{alg1:line:5}
		
		\noindent\Comment{//Generate new difference sets, enumerate minimal hitting sets on updated $\mathcal{H}$}
		\State \quad if($|\Delta_{DS}|$ $\neq$ 0) \textbf{then}\label{alg1:line:6}
		\State \quad\quad Update $\mathcal{H}$ by adding $\Delta_{DS}$ to hyperedge sets\label{alg1:line:7}
		\State \quad\quad $ST_c$ = MMCS($\mathcal{H}$)\label{alg1:line:8}
		\State \quad \textbf{else}
		\State \quad\quad $ST_c = \emptyset$
		\State \textbf{end while}
		\State \textbf{return} $ST_{ucc}$ \label{alg1:line12}
	\end{algorithmic}
\end{algorithm}

\subsection{Phase 1: construction of hypergraph}\label{sec:dud:phase1}

Phase 1 computes the useful difference sets with a relatively low cost for disk-resident relation instance. A naive choice that compares all pairs of tuples is impractical on large-scale data. Given a relation instance of \textit{n} tuples, the number of tuple pairs is $\binom{n}{2}$. It takes a very long time to execute $\binom{n}{2}$ comparisons when \textit{n} is large. The more ideal choice is to perform partial comparisons. The emphasis of phase 1 is how to compare a subset of tuple pairs with a rational cost to generate the difference sets, which can help discover UCCs quickly. 

We first introduce  Property \ref{property:alldifferenceset} that the difference set consisting of all attributes of relation schema \textit{R} does not affect the enumeration of minimal hitting sets if there is any difference set which is a nonvoid proper subset of \textit{R}. 

\begin{property} \label{property:alldifferenceset}
	The difference set $DS_F$ consisting of all attributes does not affect the enumeration of minimal hitting sets if any difference set $DS_{sub}$ which is a nonvoid proper subset of \textit{R} exists.
\end{property} 
\begin{proof}
	Given difference set $DS_F$, it is obvious that difference set $DS_{sub}$ is a subset of $DS_F$. As long as $DS_{sub}$ exists, the minimal hitting sets remains unchanged even if $DS_F$ is removed.
\end{proof}

According to Property \ref{property:alldifferenceset}, DUD only needs to generate the \textit{useful difference sets}, namely those that exclude at least one attribute, which requires the two tuples generating them to have at least one equal attribute value.

\textit{Full generation of useful difference sets with respect to attribute $A_i$}. Given instance \textit{r} of relation schema $R(A_1, A_2, \ldots, A_m)$, the difference sets with respect to $A_i$ ($1 \le i \le m$) are generated by pairs of tuples with equal $A_i$ values. Generating difference sets via a nested-loop comparison of all tuple pairs would take $O(n^2)$ time. DUD avoids this by first sorting the relation instance \textit{r} on the specified attribute, so that tuples with equal values on the attribute are grouped together and only tuples within the same run need to be compared. In this sense, sorting serves as an efficient primitive for reducing pairwise comparisons \cite{Roughgarden_2022}. Two-pass multi-way merge sort (TPMMS) \cite{DBLP:books/daglib/0020812} is used to sort files which exceed the sorting buffer size, while files within this size are loaded into memory, sorted, and written back to disk without intermediate temporary files. Let $r_i$ be the sorted version of \textit{r} in ascending order of $A_i$, and $|A_i|$ be the cardinality of $A_i$. The tuples in $r_i$ with the same $A_i$ attribute values are arranged consecutively and compared pairwise. Let $n_{i,k} (1 \le k \le |A_i|)$ be the number of tuples with attribute value $v_{i,k}$ of $A_i$, where $v_{i,1} \le v_{i,2} \le \ldots \le v_{i,|A_i|}$ and $n = \sum_{k = 1}^{|A_i|} n_{i,k}$. Hence, the pairwise comparison number $cmp_i$ performed in $r_i$ is $\sum_{k=1}^{|A_i|} \binom{n_{i,k}}{2}$, which usually is much lower than $\binom{n}{2}$. When computing the difference sets for $r_i$, DUD does not need to keep all tuples in memory, but processes the tuples with $A_i = v_{i,k} (1 \le k \le |A_i|)$ individually. The maintained tuples are compared pairwise and the generated difference sets are stored in set \textit{MDS} to avoid duplication.

However, there still remains one potential problem. If the cardinality of $A_i$ is low or there exists many duplicates for some attribute values, the value of $cmp_i$ can be very large. In an extreme case, $|A_i| = 1$, $\binom{n}{2}$ pairwise comparisons have to be executed. For this reason, DUD does not generate full difference sets with respect to the attributes whose comparison numbers are too large. In this paper, a parameter $\gamma$ is specified as an upper-bound of comparison numbers. If $cmp_i \le \gamma$, $A_i$ is called attribute \textit{of full difference set generation type} (abbreviated as \textit{attribute of FG type}), and $cmp_i$ comparisons are performed to generate difference sets. Otherwise, if $cmp_i > \gamma$, $A_i$ is called attribute \textit{of no difference set generation type} (abbreviated as \textit{attribute of NG type}), and DUD \textit{does not} generate difference sets with respect to $A_i$ specifically.

\textit{Determination of attributes type}. If $A_i$ is of NG type, there is no need to sort \textit{r} with respect to $A_i$ at the very start actually. We need a determination process with a much lower cost to determine attribute types (FG type or NG type). The number of $cmp_i$ is concerned only with the $A_i$ values. Before sorting \textit{r} and generating difference sets, DUD first decomposes \textit{r} as \textit{m} column files $L_1(A_1), L_2(A_2), \ldots, L_m(A_m)$ \cite{DBLP:conf/vldb/StonebrakerABCCFLLMOORTZ05}. Then $L_i$ ($1 \le i \le m$) is sorted and the number of $cmp_i$ is computed by counting the occurrence number of each distinct value. Note that DUD does not maintain all values of $L_i$ in memory but only keeps a counter for each run to accumulate the value of $cmp_i$. If $cmp_i \le \gamma$, $A_i$ is an attribute of FG type. Otherwise, $A_i$ is an attribute of NG type.

In this paper, let $S_{FG}$ be the set of attributes of FG type, and $S_{NG}$ be the set of attributes of NG type.

\textit{Determination of parameter $\gamma$}. For DUD, we must determine the parameter $\gamma$ for its execution. Given $r_i$, the maximum number of possible comparisons is $\binom{n}{2}$, which is too large obviously. In this paper, $\gamma$ is specified to $3 \times n \times \log_2 n$ by default, which is smaller than $\binom{n}{2}$ substantially and still large enough to generate useful difference sets. The rationality of determination of $\gamma$ is verified in the experiments.

\textit{Supplement of difference sets}. The attributes of NG type, whose useful comparison numbers are larger than $\gamma$, are skipped in difference set generation. After all useful difference sets are generated with respect to attributes of FG type, DUD performs a uniform sampling operation on \textit{r}. Let $|S_{NG}|$ be the number of attributes of NG type, the total pairwise comparison number for the sampled tuples is set to $n \times |S_{NG}|$, and the number of sampled tuples is $\sqrt{n \times |S_{NG}|}$. The aim of the above operations is to supplement the difference sets due to the neglect of attributes of NG type with a low enough cost and small enough memory consumption.

In phase 1, DUD generates useful difference sets and maintains their minimal form in \textit{MDS}. Generated difference sets are first collected in a hash set using their bit-set representation, which removes duplicate difference sets. DUD does not insert them into \textit{MDS} one by one. Instead, minimization is performed in batch: if a collected difference set properly contains another collected difference set, the larger one is removed, so that a difference set is kept only when no proper subset of it remains in the collected set. The initial minimization has worst-case cost $O(k^2)$, where $k$ is the number of distinct collected difference sets before minimization. When new difference sets are generated during validation, DUD applies a delta-based minimization procedure: it compares old difference sets with new ones and new difference sets with each other, but does not compare old difference sets among themselves again, since they have already been minimized in the previous iteration. In this way, DUD constructs a hypergraph $\mathcal{H}(V, E)$, whose vertex set is the attribute set of the relation schema and whose hyperedge set consists of the minimal difference sets in \textit{MDS}.

\begin{algorithm}[t]
	\renewcommand{\algorithmicrequire}{\textbf{Input:}}
	\renewcommand{\algorithmicensure}{\textbf{Output:}}
	\renewcommand{\algorithmiccomment}[1]{ #1}
	
	\caption{BuildInitialHypergraph}
	\label{alg:phase1}
	
	\footnotesize
	\begin{algorithmic}[1]
		\Require relation instance \textit{r} of \textit{m} attributes 
		\Ensure The initial hypergraph $\mathcal{H}$		
		
		\State Decompose $r(A_1, \ldots, A_m)$ into column files $L_1, \ldots, L_m$ \label{alg2:line:1}
		\Comment{//Compute the comparison number for each attribute}
		\State \textbf{for} ($i = 1; i \le m; i=i+1$) \label{alg2:line:2}
		\State \quad Sort $L_i$ with respect to $A_i$ and compute $cmp_i$ of $L_i$\label{alg2:line:3}
		
		
		\noindent\Comment{//Compute the value of $\gamma$}
		\State $\gamma = 3 \times n \times \log_2 n$ \label{alg2:line:4}
		
		\noindent\Comment{//Generate difference sets initially}
		\State \textbf{for} ($i = 1; i \le m; i=i+1$) \label{alg2:line:9}
		
		\noindent\Comment{//Current attribute is of FG type}
		\State \quad \textbf{if} $cmp_i \le \gamma$ \textbf{then}
		\State \quad\quad Sort \textit{r} to be $r_i$ with respect to $A_i$ and   				
		
		\Comment{generate difference sets by tuples with equal $A_i$ values} \label{alg2:line:11}
		
		\noindent\Comment{//Supplement of difference sets}
		\State Randomly sample $\sqrt{n \times |S_{NG}|}$ tuples\label{alg2:line:12} 
		\State Compare samples pairwise and build difference sets \label{alg2:line:13}		
		
		\noindent\Comment{//Build initial hypergraph by minimal difference sets}
		\State Minimize generated difference sets in set MDS \label{alg2:line:14}
		\State Build hypergraph $\mathcal{H}(V, E)$ by difference sets in MDS \label{alg2:line:15}
	\end{algorithmic}
\end{algorithm}

\textit{Execution process of phase 1}. To sum up, the execution process of phase 1 is described as below, whose pseudocode is listed in Algorithm \ref{alg:phase1}. In phase 1, DUD first decomposes the relation instance \textit{r} into \textit{m} column files (Line \ref{alg2:line:1} in Algorithm \ref{alg:phase1}), computes the useful comparison number for each column file (Line \ref{alg2:line:2} and Line \ref{alg2:line:3}), specifies $\gamma$ value (Line \ref{alg2:line:4}) for determining the attribute types. For each attribute $A_i$ of FG type, DUD sorts \textit{r} in ascending order of $A_i$ and performs the useful pairwise tuple comparisons to generate difference sets (Line \ref{alg2:line:9} to Line \ref{alg2:line:11}). After generating difference sets with respect to all attributes in $S_{FG}$, DUD executes a randomly uniform sampling on \textit{r} to supplement the difference sets (Line \ref{alg2:line:12} and Line \ref{alg2:line:13}). When phase 1 terminates, DUD builds the initial hypergraph with the generated difference sets as hyperedges (Line \ref{alg2:line:14} and Line \ref{alg2:line:15}).

On generating difference sets, it is found that the number of the generated difference sets is far smaller than the number of total pairwise comparisons. This is verified in the benchmark data sets and real-life data sets, which are utilized in the performance evaluation.

\begin{figure}
	\centering
	\includegraphics[scale = 0.32]{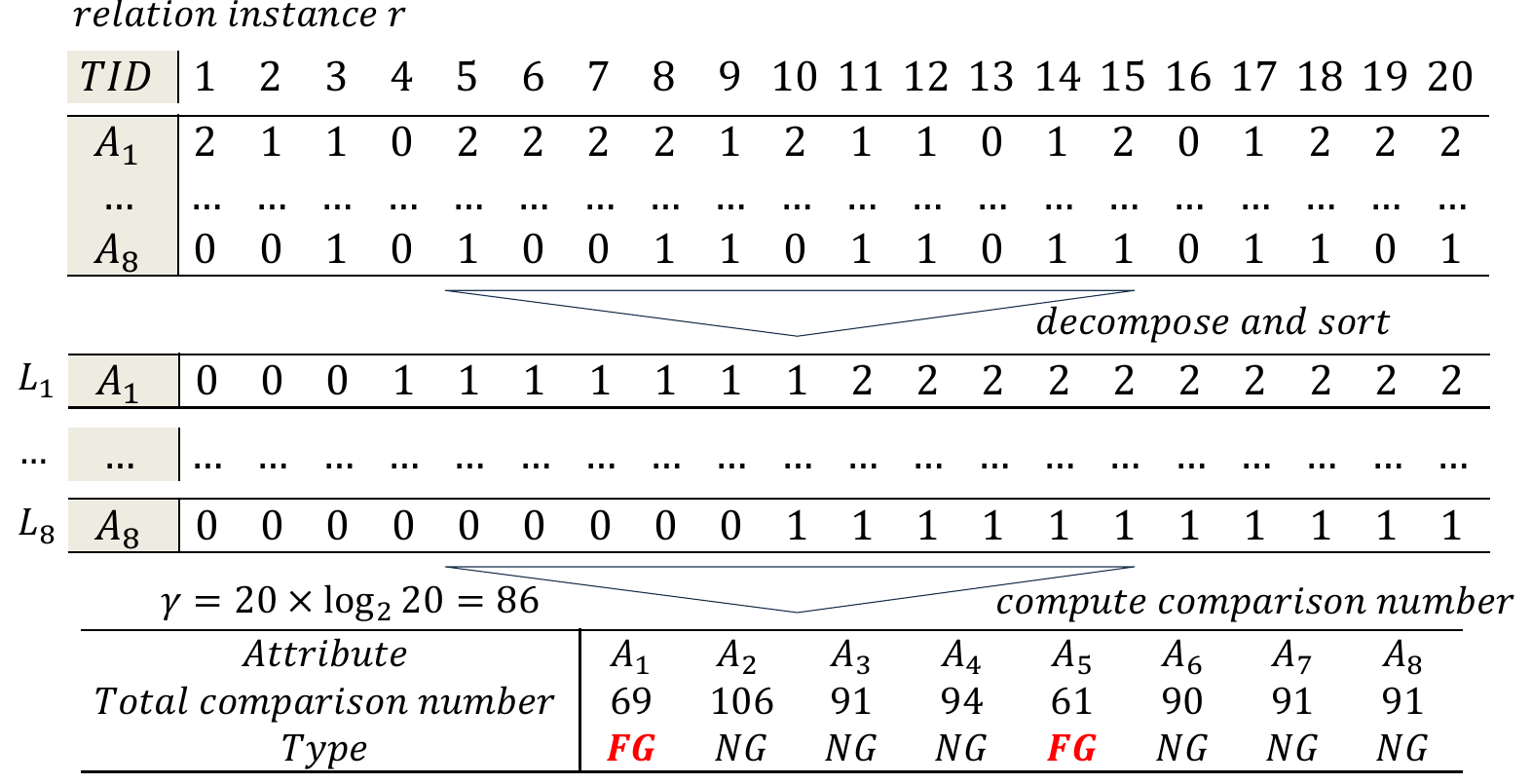}
	\caption{The illustration of determining attribute types.}
	\label{fig:determineattributetype}
\end{figure}

\begin{figure}
	\centering
	\includegraphics[scale = 0.315]{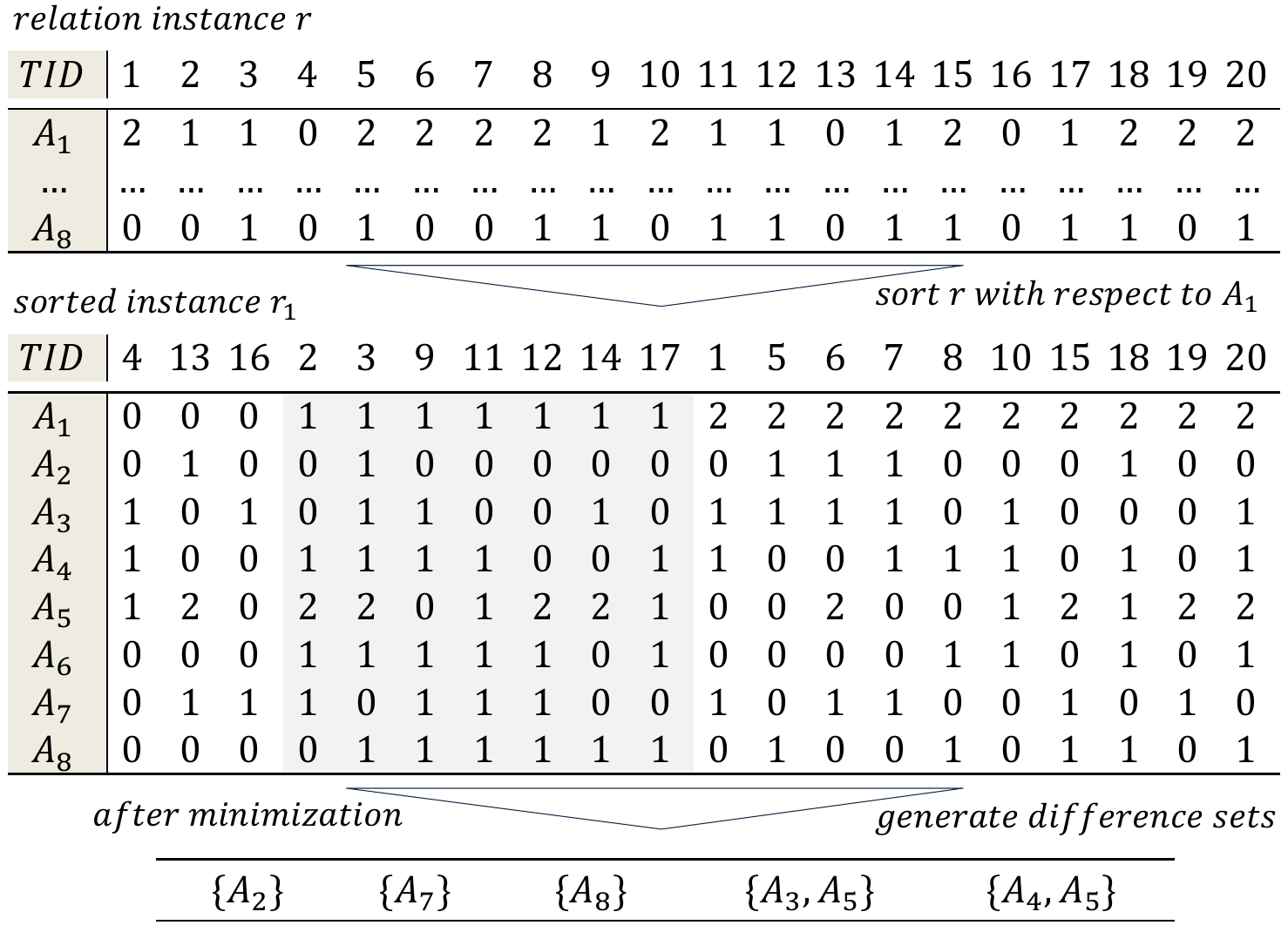}
	\caption{The illustration of generating difference sets for $r_1$.}
	\label{fig:ds1}
\end{figure}

\begin{figure}
	\centering
	\includegraphics[scale = 0.3]{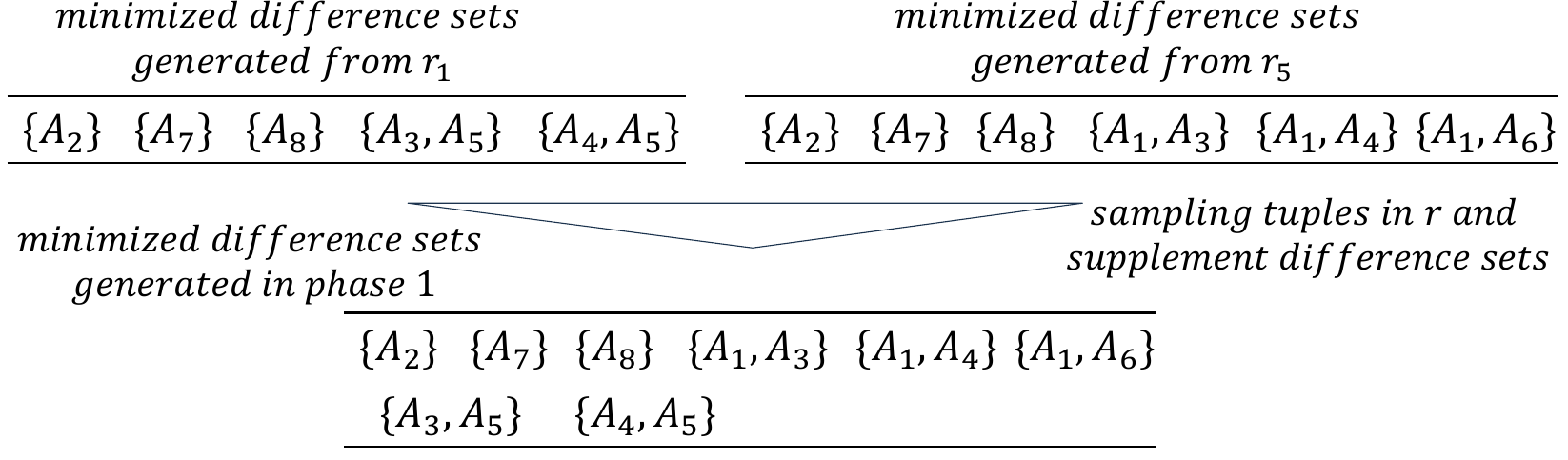}
	\caption{The illustration of difference sets in phase 1.}
	\label{fig:dsinphase1}
\end{figure}

\begin{example}
	The execution of phase 1 is illustrated in Figure \ref{fig:determineattributetype}, Figure \ref{fig:ds1} and Figure \ref{fig:dsinphase1}. The determination of attribute type is depicted in Figure \ref{fig:determineattributetype}. After decomposing and sorting column files, the comparison number for each attribute is computed. Then the value of $\gamma$ is set to 86\footnote{Due to too small scale, $\gamma$ is set to $n \times \log_2 n$ in running example provisionally rather than $3 \times n \times \log_2 n$ in this paper.} in the running example. For example, $cmp_1 = 69 \le 86$ and $A_1$ is of FG type, $cmp_2 = 106 > 86$ and $A_2$ is of NG type. Since $A_1$ is of FG type, its processing is shown in Figure \ref{fig:ds1} to generate difference sets. The relation instance \textit{r} is sorted with respect to $A_1$, and the difference sets are generated by pairwise comparisons of tuples with equal $A_1$ values in $r_1$. The difference sets generated for $r_1$ and $r_5$ are illustrated in Figure \ref{fig:dsinphase1}. With supplement of difference sets, DUD generates eight difference sets in total, which is used to build hypergraph for minimal hitting enumeration in phase 2. The built hypergraph is provided in Figure \ref{fig:phase2}.
\end{example}

\subsection{Phase 2: enumeration of minimal hitting sets}

By the built hypergraph $\mathcal{H}(V, E)$, DUD enumerates minimal hitting sets in phase 2, utilizing the existing algorithm MMCS \cite{DBLP:journals/dam/MurakamiU14} (Line \ref{alg1:line:2} of Algorithm \ref{alg:dud}). For better understanding and completeness of algorithm description, the implementation of MMCS is described below.

The enumeration of minimal hitting sets can be treated as breadth-first exploration on a set enumeration tree $\mathcal{T}$ \cite{DBLP:books/sp/fpm2014}, assuming that $A_1, A_2, \ldots, A_m$ already are arranged in lexicographic ordering. Every node in $\mathcal{T}$ has four fields: \textit{atts}, \textit{cand}, \textit{uncov} and \textit{crit}. 
\begin{itemize}
	\item[-] The first field \textit{atts} is the set of attributes maintained by the node, and $atts \subseteq V$.
	\item[-] The second field \textit{cand} is an array to maintain the attributes which can be added to \textit{atts} fields of its child nodes.
	\item[-] The third field \textit{uncov} is a set of hyperedges which have no common elements with \textit{atts}, i.e., $uncov = \{ e | e \in E, e \cap atts = \emptyset \}$.
	\item[-] The fourth field \textit{crit} keeps critical hyperedges for each attribute in \textit{atts}. $\forall A \in atts$, $crit(A) = \{ e | e \in E, e \cap att = A \}$. This can be implemented as a hash table mapping from single attribute to corresponding critical hyperedges.
\end{itemize}

Initially, the root \textit{rt} of $\mathcal{T}$ is created, with $rt.atts = \emptyset$, $rt.cand = \{ A_1, A_2, \ldots, A_m \}$, $rt.uncov = E$ and empty \textit{crit} field. Then \textit{rt} is added to a queue \textit{Q}. 

In each iteration, if \textit{Q} has no more element, phase 2 terminates. Otherwise, the head \textit{o} of \textit{Q} is removed and processed. If \textit{o} satisfies minimality condition, i.e., $o.uncov = \emptyset$ and $o.crit(A) \neq \emptyset$ for $\forall A\in o.atts$, it is proved that \textit{o} is a minimal hitting set \cite{DBLP:journals/dam/MurakamiU14}. Otherwise, if \textit{o} does not satisfy minimality condition, the child nodes of \textit{o} are constructed by adding each element in \textit{o.cand}. Let \textit{A} be the \textit{a}th element ($1 \le a \le |o.cand|$) in \textit{o.cand}, and $o.cand(a + 1, \ldots)$ be the elements in \textit{o.cand} after the \textit{a}th position. The \textit{a}th child node $cld_a$ of \textit{o} is generated with $cld_a.atts = o.atts \cup A$, $cld_a.cand = o.cand(a + 1, \ldots)$, $cld_a.uncov$ and $cld_a.crit$ can be updated by the information in \textit{o.uncov} and \textit{o.crit}. The child node $cld_a$ is added to \textit{Q} for the following iterations.

\textit{Pruning strategy}. Given current node \textit{o}, $\forall e \in o.uncov$, since a minimal hitting set should have at least one common element with every hyperedge, any minimal hitting set \textit{h} including $o.atts$ satisfies $h \cap e \neq \emptyset$. The possible extension to \textit{o.atts} should be limited to $e \cap o.cand$ rather than \textit{o.cand}. $\forall e \in o.uncov$, the hyperedge $e_m$ is chosen which minimizes $|e \cap o.cand|$. Therefore, only the attributes in $e_m$ are used to extend $o.atts$ and generate the child nodes of \textit{o}.

DUD maintains the generated minimal hitting sets (i.e., UCC candidates) of phase 2 in $ST_{c}$.

\begin{figure}
	\centering
	\includegraphics[scale = 0.3]{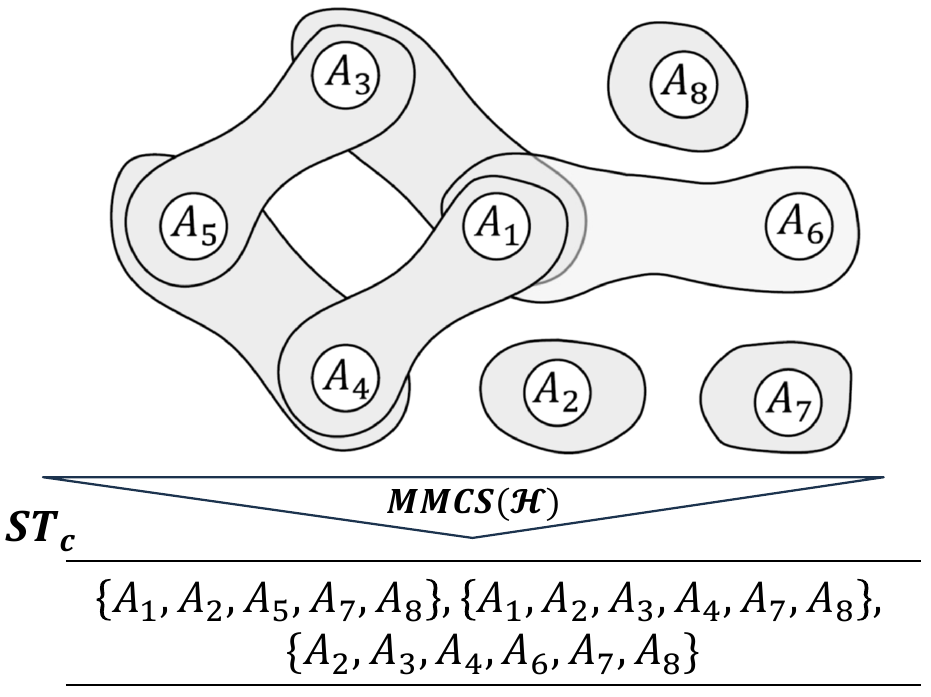}
	\caption{The illustration of execution results of phase 2 on hypergraph $\mathcal{H}$ built in phase 1, whose vertices are $\{ A_1, A_2, A_3, A_4, A_5, A_6, A_7, A_8 \}$ and hyperedges are $\{A_2\}$, $\{A_7\}$, $\{A_8\}$, $\{A_1, A_3\}$, $\{A_1, A_4\}$, $\{A_1, A_6\}$, $\{A_3, A_5\}$ and $\{A_4, A_5\}$.}
	\label{fig:phase2}
\end{figure}

\begin{example}
	Figure \ref{fig:phase2} depicts the results of phase 2 on the running example. Given the built hypergraph $\mathcal{H}$ in phase 1, phase 2 returns 3 UCC candidates, each of which has at least one common element with every hyperedge. 
\end{example}

\subsection{Phase 3: validation of UCC candidates}

Since phase 1 often does not generate complete difference sets, the UCC candidates obtained in phase 2 cannot be guaranteed to be true UCCs and need to be validated. The efficient validation involves two important issues. 
\begin{itemize}
	\item[-] \textit{Issue 1}. How to reduce the number of validated candidates in phase 3.
	\item[-] \textit{Issue 2}. How to reduce the validation cost for the candidates in phase 3 under limited memory.
\end{itemize}

For issue 1, a better choice is to report true UCCs in candidates without actual validation. An important statement is proved in Theorem \ref{theorem:candidatepruning} that helps reduce the number of validated candidates. It is proved that any candidates which contain at least one attribute of FG type are true UCCs definitely. Of course, these candidates do not need validation anymore and can be reported directly. 

\begin{theorem} \label{theorem:candidatepruning}
	Given a UCC candidate \textit{X}, if \textit{X} contains an attribute $A_i$ of FG type, it must be a UCC.	 
\end{theorem}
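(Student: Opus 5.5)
I would argue by contradiction, using only the characterization recalled in the preliminaries: $X$ is a UCC exactly when $X \cap DS(t_1,t_2) \neq \emptyset$ for every pair of distinct tuples. Suppose $X$ is a UCC candidate, i.e., $X$ is a minimal hitting set of the hypergraph $\mathcal{H}$ currently maintained by DUD, and suppose $X$ contains an attribute $A_i$ of FG type but is not a UCC. Then by Definition \ref{definition:ucc} there exist two distinct tuples $t_1, t_2 \in r$ with $t_1[X] = t_2[X]$.

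The key step is to locate the difference set of this pair among the hyperedges. Since $A_i \in X$, we have $t_1[A_i] = t_2[A_i]$. The two tuples therefore lie in the same run of the sorted instance $r_i$. Because $A_i$ is of FG type ($cmp_i \le \gamma$), phase 1 compared all pairs inside every run of $r_i$. Hence $DS(t_1,t_2)$ was generated and placed into the collection from which \textit{MDS} is built. Two degenerate situations need a remark. First, if $DS(t_1,t_2) = \emptyset$ (duplicate tuples), then no attribute set is a UCC and the relation has no UCC at all; I would either note that the paper implicitly assumes distinct tuples or observe that the empty hyperedge makes $\mathcal{H}$ have no hitting sets, so no candidate $X$ exists. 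Second, $DS(t_1,t_2)$ excludes $A_i$, so it is a useful difference set and Property \ref{property:alldifferenceset} plays no role in discarding it.

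Next, minimization is handled as follows. The batch minimization and the later delta-based minimization only remove a difference set when some proper subset of it is kept. So after minimization there is a hyperedge $e \in E$ with $e \subseteq DS(t_1,t_2)$. Hyperedges only accumulate across iterations, and each newly added set is either kept or replaced by a subset of itself, so this containment persists in every later version of $\mathcal{H}$. This is the step I expect to need the most care: one must check that the minimization procedures never drop a set without retaining a subset of it. Equivalently, one can invoke $Tr(\mathcal{H}) = Tr(min(\mathcal{H}))$.

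Finally, since $X$ is a hitting set of $\mathcal{H}$, we get $X \cap e \neq \emptyset$, and therefore $X \cap DS(t_1,t_2) \neq \emptyset$. So some attribute $B \in X$ has $t_1[B] \neq t_2[B]$, which contradicts $t_1[X] = t_2[X]$. Hence $X$ is a UCC.
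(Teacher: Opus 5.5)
Your proof is correct and follows the same approach as the paper: a pair agreeing on the FG attribute $A_i$ lies in one run of $r_i$, so phase 1 generated its difference set, and since $X$ hits every hyperedge it must separate the pair on some attribute of $X$. The paper does this as a direct case split on whether $t_1[A_i]$ equals $t_2[A_i]$ rather than by contradiction, and your explicit treatment of minimization (some hyperedge $e \subseteq DS(t_1,t_2)$ survives in every later version of $\mathcal{H}$) and of duplicate tuples supplies details the paper's proof assumes when it refers to ``the hyperedge corresponding to'' the difference set of $t_1$ and $t_2$.
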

\begin{proof}
	$\forall t_1, t_2 \in r$, there exist two cases for the equality comparison between $t_1.A_i$ and $t_2.A_i$. If $t_1.A_i \neq t_2.A_i$, $t_1[X] \neq t_2[X]$ naturally since $A_i \in X$. If $t_1.A_i = t_2.A_i$, according to the execution in phase 1, the tuples with the equal $A_i$ attribute values are compared pairwise, whose generated difference sets do not include $A_i$. Since \textit{X} is a minimal hitting set generated in phase 2, it has at least one common element with every hyperedge. The common elements of \textit{X} with the hyperedge corresponding to difference set generated by comparing $t_1$ and $t_2$ do not include $A_i$ also. It means that $t_1$ and $t_2$ do not have equal value for at least one attribute in \textit{X} other than $A_i$, we have $t_1[X] \neq t_2[X]$. To sum up, $\forall t_1, t_2 \in r$, $t_1[X] \neq t_2[X]$ in any case, i.e., \textit{X} is a UCC.
\end{proof}

\begin{algorithm}[t]
	\renewcommand{\algorithmicrequire}{\textbf{Input:}}
	\renewcommand{\algorithmicensure}{\textbf{Output:}}
	\renewcommand{\algorithmiccomment}[1]{ #1}
	
	\caption{ValidateByHash}
	\label{alg:phase3}
	
	\footnotesize
	\begin{algorithmic}[1]
		\Require the candidate set $ST_c$ and true UCC set $ST_{ucc}$ 
		\Ensure a set $\Delta_{DS}$ keeping newly generated difference sets
		
		\State Initialize $\Delta_{DS}$ to be an empty set
		\State \textbf{foreach} $X \in ST_c$ \textbf{do}\label{alg3:line2}
		\State \quad \textbf{if} $X$ contains any attributes of FG type \textbf{then}
		\State \quad\quad Remove $X$ from $ST_c$ and put it to $ST_{ucc}$\label{alg3:line4}
		
		\State \textbf{if} $|ST_c| = 0$ \textbf{then}\label{alg3:line5}
		\State \quad \textbf{return} $\Delta_{DS}$\label{alg3:line6} \Comment{// Current $\Delta_{DS}$ is empty}
		
		\noindent\Comment{// Perform hash-based validation for candidates not directly reported as UCCs}
		
		\noindent\Comment{// The attribute of NG type in $S_{NG}$ are arranged in the ascending order of maximum run length}
		\State \textbf{foreach} $A_j \in S_{NG}$ \textbf{do} 
		
		\noindent\Comment{// $ST_{pis}$ keeps the pis of conflicting tuples}
		\State \quad Initialize an empty set $ST_{pis}$, whose elements are sets.
		\State \quad $C_j = \{ X | X \in ST_c \ and \ A_j \in X \}$ and $ST_c = ST_c \setminus C_j$\label{alg3:line9}
		
		\State \quad \textbf{if} ($|C_j| = 0$) \textbf{then}\label{alg3:line10}
		\State \quad\quad \textbf{continue}\label{alg3:line11}
		
		\State \quad Obtain $r_j$, the relation sorted by $A_j$; sort $r$ only if $r_j$ has not been materialized.
		
		\noindent\Comment{// Let $v_{j,1} \le \ldots \le v_{j,|A_j|}$ be distinct values of $A_j$}
		\State \quad \textbf{for} $(k = 1; k \le |A_j|; k = k + 1)$ \textbf{do}
		\State \quad\quad Initialize hash table $HT_X$ for $X \in C_j$\label{alg3:line14}
		\State \quad\quad Retrieve each tuple $t$ in $r_j$ with $A_j = v_{j,k}$\label{alg3:line15}
		\State \quad\quad Maintain $(t[X], t)$ in $HT_X$ for $X \in C_j$\label{alg3:line16}

		\State \quad\quad \textbf{foreach $X \in C_j$}
		
		\noindent\Comment{// \textit{key} is the projection on \textit{X} and \textit{value} is a set of tuples with the same projections.}
		\State \quad\quad\quad \textbf{foreach} $(key, value) \in HT_X$ \textbf{do}\label{alg3:line18}
		\State \quad\quad\quad\quad \textbf{if} $value.size \ge 2$ \textbf{then}\label{alg3:line19}
		\State \quad\quad\quad\quad\quad \textit{X} is an invalid UCC and add $value$ to $ST_{pis}$\label{alg3:line20}
		\State \quad\quad\quad\quad\quad Perform $\binom{|value|}{2}$ pairwise comparison, the  
		
		\Comment{\quad\quad\quad generated difference sets are inserted $\Delta_{DS}$}\label{alg3:line21}
		
		\State \quad Put candidates in $C_j$ which are not invalid to $ST_{ucc}$\label{alg3:line22}			
		
		\noindent\Comment{// All candidates are validated}
		\State \quad \textbf{if} ($|ST_c| = 0$) \textbf{then}\label{alg3:line23}
		\State \quad\quad \textbf{break}\label{alg3:line24}		
		
		\State \textbf{return} $\Delta_{DS}$			
	\end{algorithmic}
\end{algorithm}

\textit{Execution process of phase 3}. In phase 3, DUD first reports the true UCCs from candidates by Theorem \ref{theorem:candidatepruning}. This reduces the number of candidates to be validated. For the remaining validation candidates, the validation process is executed in several iterations. At each iteration, DUD sorts the relation instance with respect to a certain attribute of NG type,  and adopts hash-based validation by maintaining a subset of tuples with equal value of the attribute in memory at a time. Note that the validation is executed in batch mode, i.e., several candidates can be validated at the same time. This reduces the validation cost further. The candidates which are validated to be true UCCs are reported directly. Some candidates are validated to be non-UCCs because there exists conflict tuples, which are used to generate new difference sets and update the hypergraph for the following operations. The pseudocode of phase 3 is listed in Algorithm \ref{alg:phase3}.

Given the UCC candidates $ST_c$, $\forall X \in ST_{c}$, if $\exists A_i \in X$ and $A_i$ is of FG type, \textit{X} is reported to be a true UCC and removed from $ST_c$ (Line \ref{alg3:line2} to Line \ref{alg3:line4} in Algorithm \ref{alg:phase3}). In this paper, the discovered UCCs are stored in $ST_{ucc}$. In this way, the number of candidates to be validated can be reduced significantly based on the following \textit{intuition for candidate pruning}. At best, all candidates can be reported as true UCCs and the validation process of relatively high cost can be avoided entirely (Line \ref{alg3:line5} and Line \ref{alg3:line6}). 

\textit{Intuition for candidate pruning}. This intuition concerns the candidates initially generated in phase 2, before the pruning by Theorem \ref{theorem:candidatepruning} is applied. Many of these candidates may contain at least one FG attribute, and this is why the pruning can reduce the number of candidates to be validated. As stated in Section \ref{sec:dud:phase1}, the threshold $\gamma$ is used to distinguish FG attributes from NG attributes according to the number of pairwise comparisons required for generating useful difference sets. An FG attribute usually has relatively high cardinality, which leads to fewer tuple pairs with equal values on that attribute. Since UCCs are used to uniquely identify tuples, attributes with higher cardinalities are more likely to participate in UCCs. After this pruning step, the remaining validation candidates in $ST_c$ contain only NG attributes and are handled by hash-based batch validation.

\begin{figure}
	\centering
	\includegraphics[scale = 0.32]{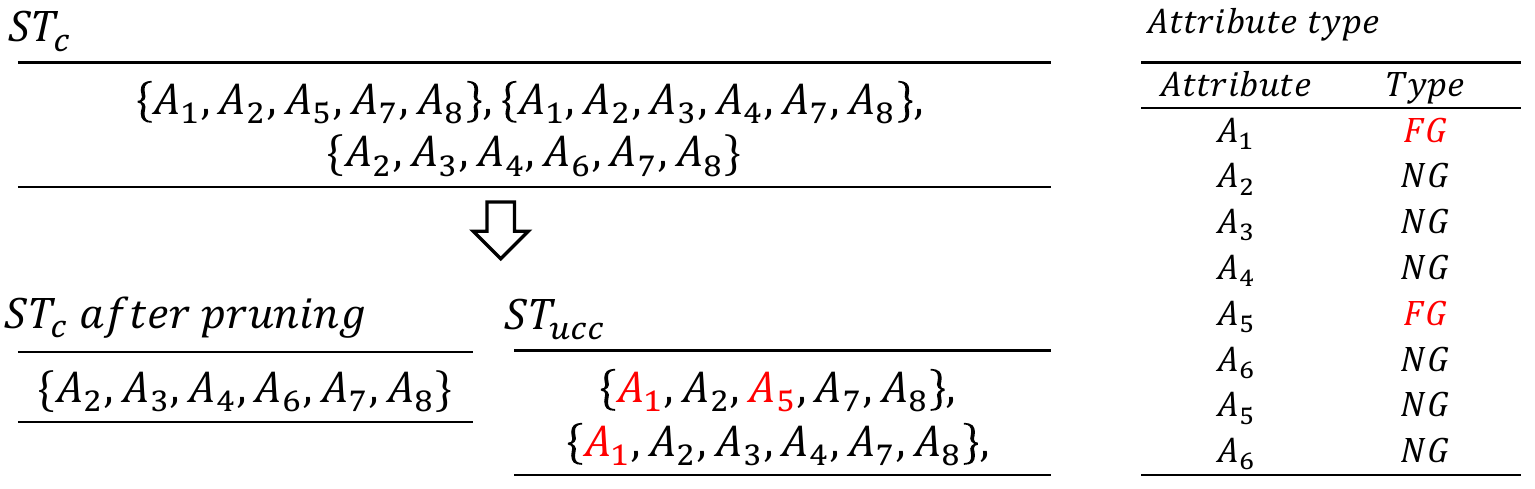}
	\caption{The illustration of directly reporting UCC candidates in the running example.}
	\label{fig:visio_candidatepruning}
\end{figure}

\begin{example}
	Figure \ref{fig:visio_candidatepruning} illustrates the effect of applying Theorem \ref{theorem:candidatepruning} to reduce the number of UCC candidates in the running example. For three UCC candidates generated in phase 2, there are two of them including $A_1$ or $A_5$, which are of FG type. These candidates can be proved to be true UCCs. Only one candidate is required for validation.
\end{example}

For the validation candidates in $ST_c$, the issue 2 comes up. This paper considers disk-resident data under limited memory, and the stripped partition based validation, which is commonly used in the existing algorithms, is inapplicable to DUD. Any data structures containing full data do not work under limited memory when they cannot be held entirely in memory. DUD utilizes hash-based strategy to perform validation on relation instance directly. The hash-based validation strategy is straightforward. Given any UCC candidate \textit{X}, any collision on a hash table with projection of tuples on \textit{X} as \textit{key} invalidates \textit{X}. \textit{The important aspect to consider here is how to achieve the collision checking without maintaining full data in memory}. Evidently, given a UCC candidate \textit{X}, any two tuples which have the same projection on \textit{X} must have the same value for each constituent attribute of \textit{X}. In this case, the collision can be checked by retrieving a relation instance, whose tuples are arranged in ascending order of one constituent attribute, and only maintaining the tuples with equal values of the attribute in memory. As illustrated in Figure \ref{fig:partialloading}, this reduces the required memory obviously. 

\begin{figure}
	\centering
	\renewcommand{\thesubfigure}{}
	\subfigure[(a) Naive approach]{
		\includegraphics[scale = 0.3]{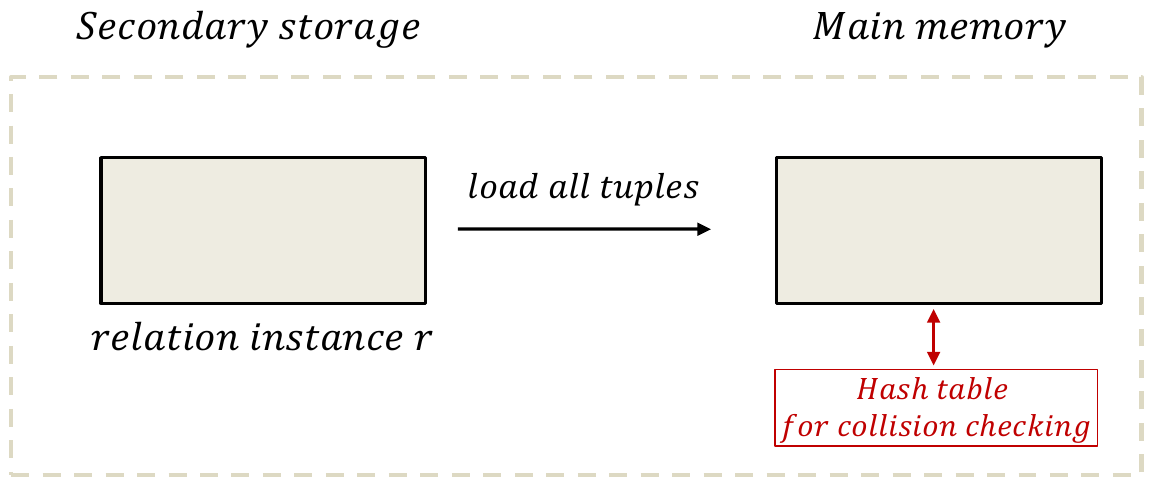}}
	\subfigure[(b) Partial-loading approach]{
		\includegraphics[scale = 0.3]{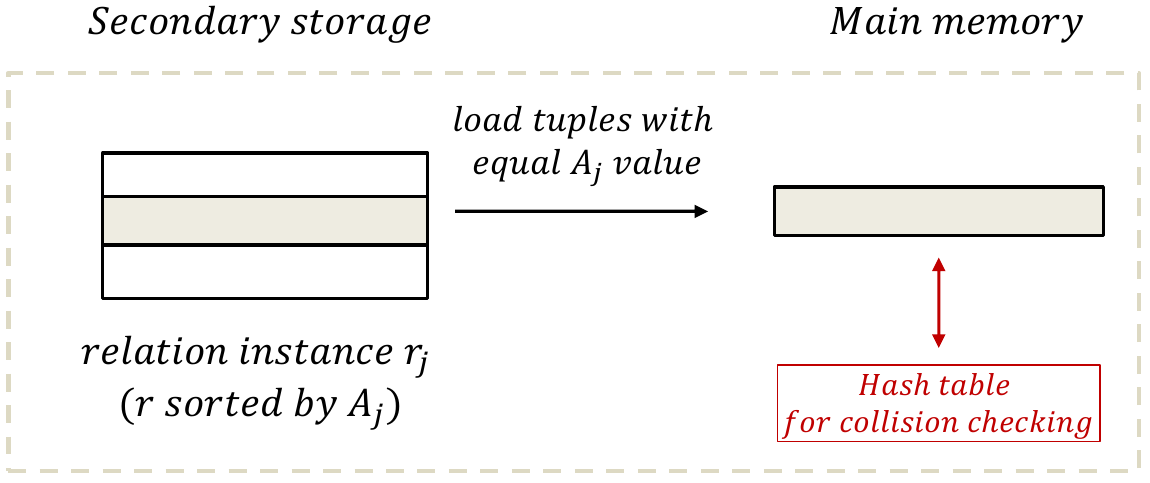}}
	\caption{The illustration of loading a subset of tuples for collision checking. (a) Naive approach of loading the entire relation instance \textit{r} into memory. (b) Partial-loading approach used in DUD, which uses the relation instance $r_j$ (i.e., $r$ sorted by attribute $A_j$) and loads only the tuples within the current run into memory. Either (a) or (b) is used to load tuples before hash-based collision checking, and DUD adopts (b).}
	\label{fig:partialloading}
\end{figure}

It is obvious that the validation candidates in $ST_c$ only contain attributes in $S_{NG}$, where $S_{NG}$ is the set of attributes of NG type. The validation process of phase 3 is executed in several iterations, each of which deals with an attribute in $S_{NG}$. The attributes in $S_{NG}$ is \textit{accessed in the ascending order of maximum run lengths}, i.e., the occurrence number of value that appears most often in an attribute, which are computed in phase 1. Given a candidate $X \in ST_c$, the sorted relation instance corresponding to any constituent attribute in \textit{X} can be used to validate \textit{X} gaining the benefit mentioned above but with varying degrees. The sorted relation instance with the smallest maximum run length is a better choice for validating \textit{X}, since it normally minimizes the maximum number of tuples with equal attribute value.

Let $A_j \in S_{NG}$ be the visited attribute in current iteration. The candidates in $ST_c$ including $A_j$ are kept in a set $C_j$ and removed from $ST_c$ (Line \ref{alg3:line9} in Algorithm \ref{alg:phase3}). If $C_j$ is empty, the next attribute in $S_{NG}$ is visited (Line \ref{alg3:line10} and Line \ref{alg3:line11}). Otherwise, DUD sorts relation instance \textit{r} in the ascending order of $A_j$ as $r_j$, if $r_j$ has not been materialized, and all candidates in $C_j$ are checked simultaneously by retrieving $r_j$ once only.

\textit{Hash-based batch validation by $r_j$}. $\forall X \in C_j$, the validation checks whether $r_j[X]$ contains duplicates. For candidate \textit{X}, a hash table $HT_X$ is built with the projection of tuples on \textit{X} as \textit{key} and a set of tuples as \textit{value} (Line \ref{alg3:line14} in Algorithm \ref{alg:phase3}). Given that $r_j$ is sorted in ascending order of $A_j$, and let $v_{j,1} \le v_{j,2} \le \ldots \le v_{j, |A_j|}$ be the distinct values occurred in $A_j$. Obviously the tuples with $A_j = v_{j,k}$ ($1 \le k \le |A_j|$) are arranged consecutively in $r_j$. DUD retrieves the tuples in $r_j$ sequentially. Let \textit{t} be the currently retrieved tuple. The following two cases are processed. 
\begin{itemize}
	\item[-] Case 1: \textit{t} has the same $A_j$ value as the previous tuple. If $HT_X$ does not contain an element whose key is $t[X]$, $(t[X], \{t\})$ is put to $HT_X$. Otherwise, \textit{t} is added to the set of tuples corresponding to key $t[X]$. The latter actually indicates that \textit{X} is a non-UCC since at least two tuples have the same projection on it. Given that $r_j$ is sorted with respect to $A_j$, the duplicates of projection on \textit{X} can only exist in the tuples with the same $A_j$ values.
	\item[-] Case 2: \textit{t} has the greater $A_j$ value than the previous tuple. This means that the following tuples cannot have duplicate projections on \textit{X} with the preceding tuples. At this point, any \textit{values} in $HT_X$ containing at least two elements are compared pairwise to generate difference sets, which is inserted into $\Delta_{DS}$ (Line \ref{alg3:line18} to Line \ref{alg3:line21} in Algorithm \ref{alg:phase3}). These tuples with the same projections on \textit{X} are \textit{conflicting tuples}. Then $HT_X$ is emptied, ready for the following operations.
\end{itemize}
Note that in validation process, DUD utilizes an array, whose size is maximum run length in $A_j$, to keep the tuples for current run $(A_j = v_{j, k})$, and the \textit{values} in hash table only maintain references to the tuples rather the tuples themselves. When a new attribute value occurs, the array is reused to keep the tuples of next run $(A_j = v_{j, k + 1})$. 

At the end of retrieval, if there does not exist any duplicate projection for candidate \textit{X}, it is reported to be a UCC and maintained in $ST_{ucc}$ (Line \ref{alg3:line22} in Algorithm \ref{alg:phase3}). If any candidates are validated to be non-UCCs, the difference sets in $\Delta_{DS}$ are used to update hypergraph $\mathcal{H}$ as new hyperedges.

\begin{figure}
	\centering
	\includegraphics[scale = 0.31]{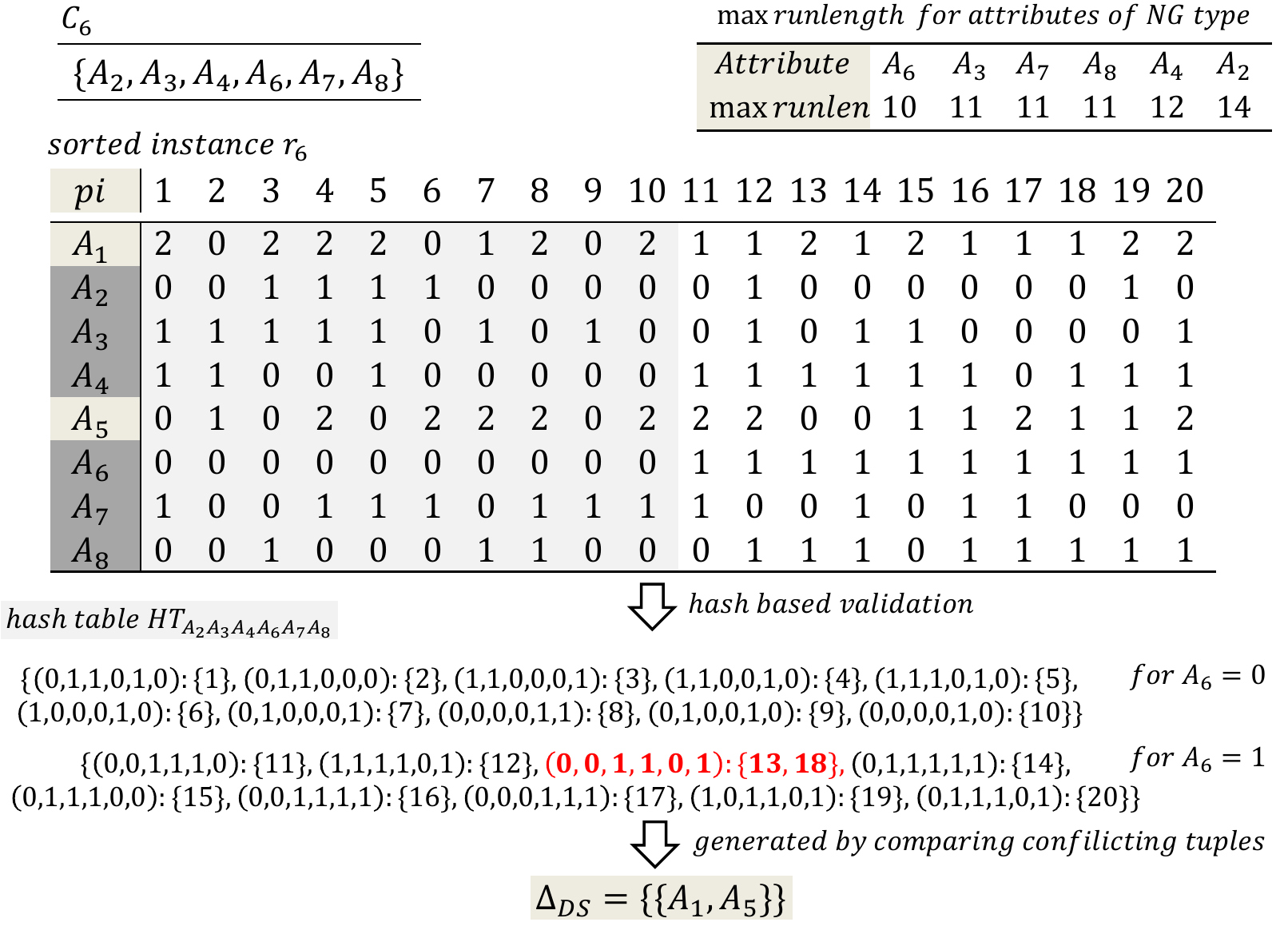}
	\caption{The illustration of hash-based validation for $C_6$.}
	\label{fig:visio_hashvalidation}
\end{figure}

\begin{example}
	Figure \ref{fig:visio_hashvalidation} illustrates the hash-based validation for UCC candidates in the running example. As depicted in Figure \ref{fig:visio_candidatepruning}, the only UCC candidate is $\{A_2$, $A_3$, $A_4$, $A_6$, $A_7$, $A_8\}$. Since $A_6$ has the smallest maximum run length among $S_{NG}$, DUD validates the candidate in $C_6$ by sorting \textit{r} as $r_6$ first. The hash table $HT_{A_2 A_3 A_4 A_6 A_7 A_8}$ is initialized to perform the collision checking. Due to the sortedness of $r_6$, the projections of at most 10 tuples are maintained in $HT_{A_2 A_3 A_4 A_6 A_7 A_8}$ instead of the total 20 tuples. By hash-based validation, the candidate in $C_6$ is found to be non-UCC and two conflicting tuples are maintained in the hash table. The corresponding difference set $\{ A_1, A_5 \}$ is generated and maintained in $\Delta_{DS}$.
\end{example}

The iteration on $S_{NG}$ continues until $ST_c$ is empty (Line \ref{alg3:line23} and Line \ref{alg3:line24} in Algorithm \ref{alg:phase3}). By now all candidates are validated. If $\Delta_{DS}$ is empty, it means all candidates are true UCCs, DUD ends and reports the found UCCs in $ST_{ucc}$ directly (Line \ref{alg1:line12} in Algorithm \ref{alg:dud}). Otherwise, DUD uses the difference sets in $\Delta_{DS}$ as newly added hyperedges in $\mathcal{H}$ and enters phase 2 again to enumerate minimal hitting sets $ST_c$ of the updated $\mathcal{H}$ (Line \ref{alg1:line:6} to Line \ref{alg1:line:8} in Algorithm \ref{alg:dud}). The following operation is executed similarly with the difference that the newly generated candidates already discovered to be UCCs in the previous operations should be removed, i.e., $ST_c = ST_c \setminus ST_{ucc}$. If $|ST_c| = 0$ or all candidates in $ST_c$ are validated to be true UCCs in phase 3, DUD ends and reports the UCCs in $ST_{ucc}$.

\begin{figure}
	\centering
	\includegraphics[scale = 0.31]{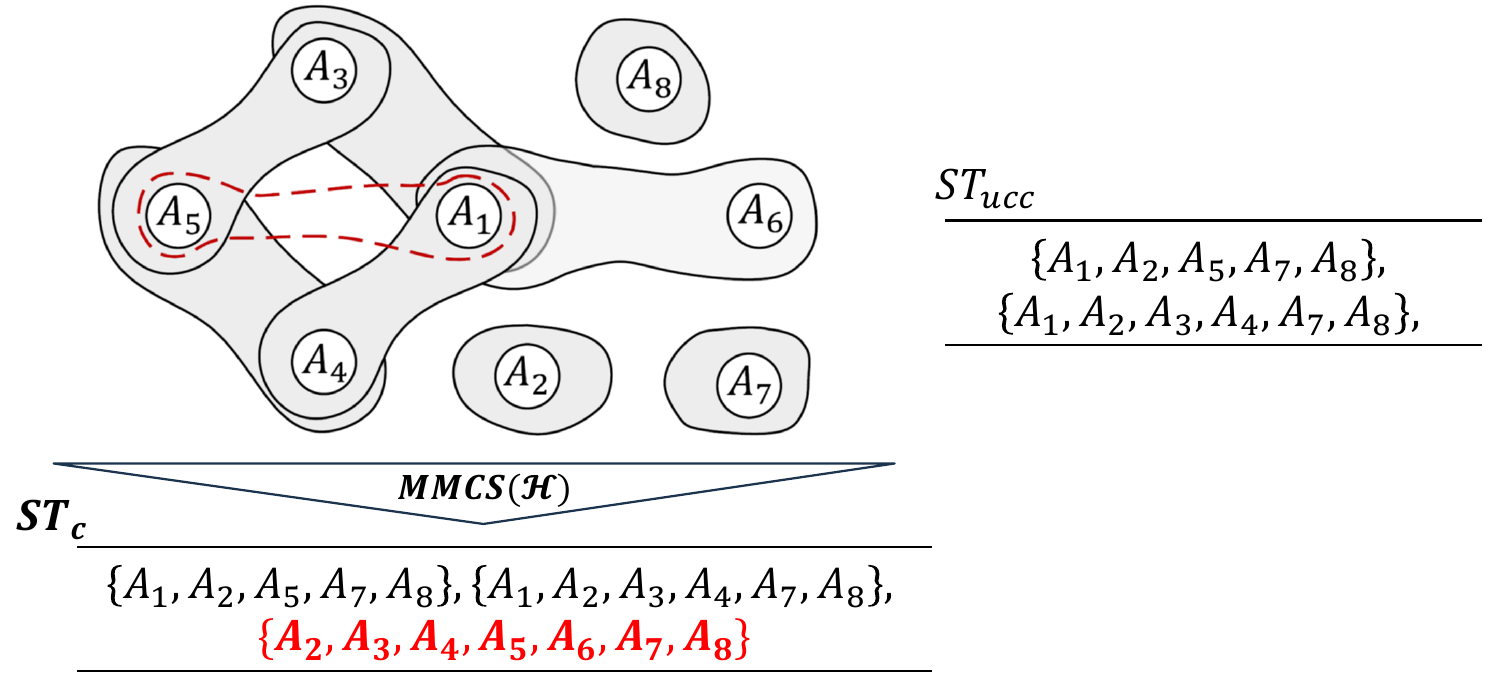}
	\caption{The illustration of second execution results of phase 2 on updated hypergraph $\mathcal{H}$ by adding a new hyperedge $\{A_1, A_5\}$.}
	\label{fig:visio_phase2_updated}
\end{figure}

\begin{example}
	Figure \ref{fig:visio_phase2_updated} illustrates second execution results of phase 2 on the updated hypergraph $\mathcal{H}$ by adding a new hyperedge $\{A_1, A_5\}$. Three candidates are generated on the updated $\mathcal{H}$, two of which are included in $ST_{ucc}$ and removed from the candidate set. The only candidate is $\{ A_2, A_3, A_4, A_5, A_6, A_7, A_8 \}$ is validated and reported as a true UCC. The execution of DUD is over.
\end{example}

\textit{Implementation of the I/O procedure}. In phase 1, the relation is first decomposed into $m$ column files, with one read of the relation and one write of the column files. Each column file is sorted to compute $cmp_i$ and determine whether attribute $A_i$ is of FG or NG type, sorting single-column files rather than the full relation. For each FG attribute, the relation is sorted by that attribute and scanned to generate full useful difference sets. Random sampling accesses a subset of tuples from the relation to supplement the initial hypergraph. If a file to be sorted exceeds the sorting buffer size, the sort requires TPMMS, with additional sequential reads and writes of temporary files. Otherwise the file is loaded into memory, sorted, and written back to disk without intermediate temporary files. In both cases, the sorted result is materialized as a file on disk rather than kept in memory, so that it can be read by subsequent steps.

In phase 3, candidates are grouped by NG attribute into batches $C_j$, and one scan of the relation sorted by $A_j$ validates all candidates in $C_j$ together, rather than scanning the relation separately for each candidate. Re-entering phase 2 after updating the hypergraph does not access the relation at all, since MMCS operates only on $\mathcal{H}$. If phase 3 is invoked again, NG attributes are visited in the same ascending order of maximum run length. Only those with a non-empty $C_j$ trigger a scan of the sorted relation. If the sorted relation was already materialized in an earlier iteration, it is reused rather than re-sorted. In our experiments, phase 3 is invoked at most twice, and the proportion of candidates reported as true UCCs without validation often exceeds 90\%, so few NG attributes ever require sorting.

\textit{Relation to direct validation in HyFD}. The batch validation strategy of DUD is related in spirit to the direct validation procedure of HyFD \cite{DBLP:conf/sigmod/PapenbrockN16}, which also validates multiple dependency candidates together by using a PLI of an LHS attribute with many small clusters and mapping LHS clusters to RHS clusters. This is similar in purpose to the use of an NG attribute with a small maximum run length in DUD to reduce the number of tuples kept in memory. However, the validation of HyFD relies on PLIs and \textit{pliRecords} built during preprocessing and held in memory, whereas DUD does not maintain such global PLI-based structures: it scans the relation sorted by the chosen NG attribute directly from disk and keeps only the tuples of the current run in memory.

\subsection{The correctness of DUD}

This part provides the proof for the correctness of DUD algorithm.

Let $\mathcal{H}_{A}(V, E_{A})$ be the hypergraph whose vertices are attributes of relation schema and hyperedges are the difference sets generated by all pairwise tuple comparisons. Obviously, the minimal hitting sets in $\mathcal{H}_{A}$ are the minimal UCCs surely. Any hypergraph $\mathcal{H}(V, E)$ used in DUD actually is a spanning subgraph of $\mathcal{H}_{A}$ \cite{DBLP:books/daglib/0030488}, where $E \subseteq E_{A}$.

\begin{theorem} \label{theorem:full2partial}
	Given $\mathcal{H}_A(V, E_A)$ and its spanning subgraph $\mathcal{H}(V, E)$, $\forall h_A \in Tr(\mathcal{H}_A)$, $\exists h \in Tr(\mathcal{H})$, $h_A \supseteq h$.
\end{theorem}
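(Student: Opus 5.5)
The plan is to show two things: first that $h_A$ is already a hitting set of $\mathcal{H}$, and then that every hitting set of a finite hypergraph contains a minimal one. Throughout we use the fact that $E \subseteq E_A$, which is what it means for $\mathcal{H}$ to be a spanning subgraph of $\mathcal{H}_A$.

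First, take an arbitrary $h_A \in Tr(\mathcal{H}_A)$. By definition, $h_A \cap e \neq \emptyset$ for every $e \in E_A$. Since every hyperedge of $\mathcal{H}$ is also a hyperedge of $\mathcal{H}_A$, we get $h_A \cap e \neq \emptyset$ for every $e \in E$. Hence $h_A$ is a hitting set of $\mathcal{H}$, though it need not be minimal there.

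Second, we extract a minimal hitting set $h \subseteq h_A$ of $\mathcal{H}$. Since $V$ is finite, the family
\[
\mathcal{F} = \{ g \subseteq h_A \mid g \cap e \neq \emptyset \text{ for all } e \in E \}
\]
is finite, and it is nonempty because $h_A \in \mathcal{F}$. Choose $h \in \mathcal{F}$ of minimum cardinality, or equivalently one that is inclusion-minimal within $\mathcal{F}$.

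We then argue that $h$ is a minimal hitting set of $\mathcal{H}$ itself, not merely minimal among subsets of $h_A$. Suppose $h' \subsetneq h$ were a hitting set of $\mathcal{H}$. Then $h' \subseteq h_A$, so $h' \in \mathcal{F}$, which contradicts the choice of $h$. Therefore $h \in Tr(\mathcal{H})$ and $h \subseteq h_A$. Constructively, the same $h$ can be obtained by greedily deleting vertices from $h_A$ one at a time while the hitting property is preserved.

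This last step, checking that minimality inside $\mathcal{F}$ gives global minimality in $Tr(\mathcal{H})$, is the only subtle point, and it holds because $\mathcal{F}$ is closed under taking subsets that still hit every edge.

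A degenerate case needs attention. If $E = \emptyset$, the unique minimal hitting set is $\emptyset$, and $\emptyset \subseteq h_A$, so the argument still goes through.
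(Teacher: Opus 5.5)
Your proof is correct, but it runs in the opposite direction from the paper's.

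The paper argues bottom-up. It views $h_A$ as built from $\emptyset$ by repeatedly adding a vertex of a not-yet-hit hyperedge of $E_A$. It then asserts that, because $E \subseteq E_A$, some intermediate stage of this expansion is a minimal hitting set $h$ of $\mathcal{H}$, so $h \subseteq h_A$.

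You argue top-down. First, $h_A$ already hits every edge of $E$. Second, by finiteness it can be shrunk to an inclusion-minimal hitting subset, and you correctly check that this subset is minimal in $Tr(\mathcal{H})$, not merely within $\mathcal{F}$.

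The paper's framing ties the statement to the MMCS-style expansion and to DUD's iterative refinement, and it is reused in reverse for Theorem~\ref{theorem:partial2full}. However, it does not explain why an intermediate stage must be minimal for $\mathcal{H}$. Take $E_A=\{\{x\},\{y\},\{x,y\}\}$, $E=\{\{x,y\},\{y\}\}$ and $h_A=\{x,y\}$. An expansion that adds $x$ before $y$ passes through $\emptyset,\{x\},\{x,y\}$, and none of these lies in $Tr(\mathcal{H})=\{\{y\}\}$. So a pruning step is needed anyway, and that is exactly your deletion step. Your route is shorter, rigorous, and constructive via greedy deletion.

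One small wording slip: choosing a minimum-cardinality element of $\mathcal{F}$ is not \emph{equivalent} to choosing an inclusion-minimal one. The former is a special case of the latter, though either choice suffices for your argument.
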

\begin{proof}
	According to definition of minimal hitting sets, $\forall h_A \in Tr(\mathcal{H}_A)$, it satisfies: $\forall e_A \in E_A$, $h_A \cap e_A \neq \emptyset$. We can treat the requirements above to be satisfied as the constraints for $h_A$. Obviously, if $\exists e_A \in E_A$, $\exists p \in \mathcal{P}(V)$, and $e_A \cap p = \emptyset$, an element in $e_A$ has to be added into \textit{p} to make it be minimal hitting set. In other words, any minimal hitting set is expanded from empty set to have at least one common element with every hypergraph. Since $E \subseteq E_A$, any minimal hitting set \textit{h} enumerated in $\mathcal{H}$ can be seen in intermediate stage of the enumeration process in $\mathcal{H}_A$. To sum up, $\forall h_A \in Tr(\mathcal{H}_A)$, there exists an element in $Tr(\mathcal{H})$, which is generated in the middle of expansion process, and $h_A \supseteq h$.
\end{proof}

\begin{theorem} \label{theorem:partial2full}
	Given $\mathcal{H}_A(V, E_A)$ and its spanning subgraph $\mathcal{H}(V, E)$, $\forall h \in Tr(\mathcal{H})$, $\exists h_A \in Tr(\mathcal{H}_A)$, $h \subseteq h_A$.
\end{theorem}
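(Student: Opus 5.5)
The plan is a greedy extension argument. Fix $h \in Tr(\mathcal{H})$ and build a hitting set of $\mathcal{H}_A$ that contains $h$. Then shrink it only by deleting vertices outside $h$, and finally argue that the result is inclusion-wise minimal in $\mathcal{H}_A$.

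\textbf{Step 1: start from a hitting set of $\mathcal{H}_A$ that contains $h$.} Since two tuples with different subscripts are distinct, every difference set is nonempty. So every $e_A \in E_A$ is a nonempty subset of $V$. Hence $h \cup V = V$ hits every hyperedge of $\mathcal{H}_A$.

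\textbf{Step 2: shrink outside $h$.} List the attributes of $V \setminus h$ in a fixed order, for instance the lexicographic order already used for MMCS in phase 2. Scan them one by one, and delete the current attribute whenever the remaining set still hits every $e_A \in E_A$. Call the result $h_A$. By construction, $h \subseteq h_A$ and $h_A$ is a hitting set of $\mathcal{H}_A$. Moreover, every $B \in h_A \setminus h$ has a critical hyperedge, i.e.\ some $e_A \in E_A$ with $e_A \cap h_A = \{B\}$.

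\textbf{Step 3: show that each $A \in h$ also has a critical hyperedge in $\mathcal{H}_A$.} Because $h \in Tr(\mathcal{H})$, the minimality condition quoted from \cite{DBLP:journals/dam/MurakamiU14} gives, for each $A \in h$, an edge $e \in E$ with $e \cap h = \{A\}$. Since $E \subseteq E_A$, this $e$ is also a hyperedge of $\mathcal{H}_A$. The aim is to show $e \cap h_A = \{A\}$. For that I would refine Step 2 so that it never keeps a vertex of $e \setminus h$, whenever removing it is possible. With a critical hyperedge for every element of $h_A$, the minimality characterization gives $h_A \in Tr(\mathcal{H}_A)$.

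\textbf{The main obstacle is Step 3.} A vertex $B \in e \setminus h$ may be forced into $h_A$ because some other hyperedge of $E_A \setminus E$ is hit only by $B$ once $h$ is fixed. In that case $A$ loses its private edge $e$. In fact this can happen even for genuine hypergraphs. Take $E = \{\{a,b\}\}$ and $E_A = \{\{a,b\},\{a\}\}$. Then $\{b\} \in Tr(\mathcal{H})$, but $Tr(\mathcal{H}_A) = \{\{a\}\}$, so no $h_A \in Tr(\mathcal{H}_A)$ contains $\{b\}$.

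So the statement as written appears to need an extra hypothesis. Before committing to the argument above, I would check which hypothesis the paper intends. One candidate is a structural property of the difference-set hypergraphs DUD actually builds (minimized MDS plus newly added conflict sets). The other is the weaker conclusion that Steps 1 and 2 do prove unconditionally: there is a hitting set $h_A \supseteq h$ of $\mathcal{H}_A$ all of whose elements outside $h$ are critical. That weaker form may be what the correctness argument of DUD needs: every candidate either is a true UCC or is invalidated by conflicting tuples that enlarge $E$.
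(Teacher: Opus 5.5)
Your counterexample is correct, and it settles the matter. With $E=\{\{a,b\}\}\subseteq E_A=\{\{a,b\},\{a\}\}$ we have $\{b\}\in Tr(\mathcal{H})$ but $Tr(\mathcal{H}_A)=\{\{a\}\}$. So the statement is false as written, and no completion of your Step~3 can exist.

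The paper's proof only asserts, ``with the similar argument'' of Theorem~\ref{theorem:full2partial}, that once the edges of $E_A\setminus E$ are added, $h$ ``can be expanded'' to a minimal hitting set of $\mathcal{H}_A$. That is exactly the step you isolated. Expansion preserves the hitting property but not minimality, because a vertex forced in by a new edge can also cover the only private edge of an element of $h$. In your example every hitting set of $\mathcal{H}_A$ must contain $a$, and $a$ covers $\{a,b\}$, the only private edge of $b$; hence no hitting set containing $\{b\}$ is minimal.

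You can also drop the hedge about a structural hypothesis on difference-set hypergraphs, because the example comes from an actual relation. Take tuples $(0,0)$, $(1,1)$, $(2,0)$ over $(a,b)$. Their pairwise difference sets are $\{a,b\}$, $\{a\}$ and $\{a,b\}$, so $E_A=\{\{a,b\},\{a\}\}$ and the only minimal UCC is $\{a\}$. The set $E=\{\{a,b\}\}$ is what comparing only the first two tuples yields. Since the theorem quantifies over every spanning subgraph, restricting to difference sets does not rescue it.

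Your weaker statement from Steps~1--2 is true. What the correctness argument (Theorem~\ref{theorem:correctness}) actually needs, though, is a dichotomy with a one-line proof. Take $h\in Tr(\mathcal{H})$.
\begin{itemize}
\item If $h$ hits every edge of $E_A$, then $h\in Tr(\mathcal{H}_A)$: every proper subset of $h$ misses some edge of $E\subseteq E_A$.
\item Otherwise $h$ is a non-UCC. Validation then finds conflicting tuples whose difference set is disjoint from $h$, and that set is added to $E$.
\end{itemize}
Combine this with Theorem~\ref{theorem:full2partial}, whose statement is true: a hitting set of $\mathcal{H}_A$ also hits $\mathcal{H}$, so it contains a member of $Tr(\mathcal{H})$. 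Together these give $Tr(\mathcal{H})=Tr(\mathcal{H}_A)$ once every member of $Tr(\mathcal{H})$ is a UCC, which is what DUD's stopping condition guarantees. Indeed, each $h_A\in Tr(\mathcal{H}_A)$ then contains some $h\in Tr(\mathcal{H})$. That $h$ is itself in $Tr(\mathcal{H}_A)$, and minimality of $h_A$ forces $h=h_A$. This replaces the paper's ``expand $h$ until $h=h_A$'' step, which rests on the false statement.
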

\begin{proof}
	With the similar argument of Theorem \ref{theorem:full2partial}, $\forall h \in \mathcal{H}$, if hyperedges in $E_A \setminus E$ are added to $\mathcal{H}$, $h$ can be expanded to be a minimal hitting set $h_A \in Tr(\mathcal{H}_A)$, i.e., $h \subseteq h_A$.
\end{proof}

\begin{figure*}
	\centering
	\includegraphics[scale = 0.45]{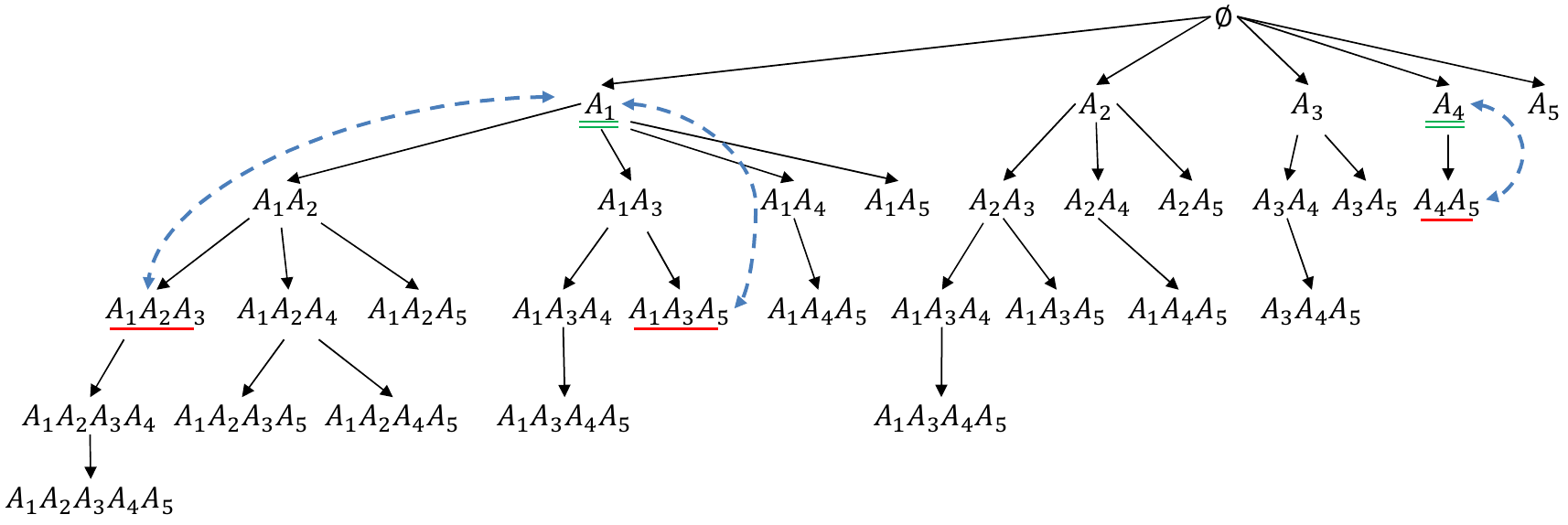}
	\caption{The illustration of correspondence between minimal hitting sets in hypergraph used by DUD and complete hypergraph. The attribute sets underlined with red line are final minimal hitting sets and those underlined with green double line are minimal hitting sets in hypergraph used by DUD.}
	\label{fig:visio_full2partial}
\end{figure*}

Combining Theorem \ref{theorem:full2partial} and Theorem \ref{theorem:partial2full}, every minimal hitting set in $\mathcal{H}_A$ corresponds to a minimal hitting set in $\mathcal{H}$, and every minimal hitting set in $\mathcal{H}$ has a corresponding final minimal hitting set in $\mathcal{H}_A$, which is illustrated in Figure \ref{fig:visio_full2partial}. Theorem \ref{theorem:correctness} proves the correctness of DUD.

\begin{theorem} \label{theorem:correctness}
	DUD discovers the true UCCs correctly.
\end{theorem}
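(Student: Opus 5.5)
We prove Theorem \ref{theorem:correctness} in two parts: \emph{soundness} (every set reported in $ST_{ucc}$ is a minimal UCC) and \emph{completeness} (every minimal UCC is eventually reported). Throughout, every hypergraph $\mathcal{H}(V,E)$ that DUD maintains is a spanning subgraph of $\mathcal{H}_A$. This holds because each hyperedge is a difference set $DS(t_1,t_2)$ of an actual tuple pair, whether it comes from the FG runs, the sampling, or the conflicting tuples of phase 3. We will also use that $Tr(\mathcal{H}_A)$ is exactly the set of minimal UCCs, by the cited equivalence together with $Tr(\mathcal{H})=Tr(min(\mathcal{H}))$.

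For soundness, take any $X$ placed into $ST_{ucc}$ in some iteration. Then $X\in Tr(\mathcal{H})$ for the current $\mathcal{H}$, and $X$ is either reported by Theorem \ref{theorem:candidatepruning} or passes the hash-based validation.

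First we show that $X$ is a UCC. In the FG case this is Theorem \ref{theorem:candidatepruning}, which uses that all difference sets of pairs agreeing on the FG attribute are in $E$ and remain there, since edges are never deleted (minimization only replaces an edge by a subset, which preserves hitting). In the NG case, $X$ is processed in the batch $C_j$ for some $A_j\in X$. Any two tuples with $t_1[X]=t_2[X]$ agree on $A_j$, so they lie in the same run of $r_j$ and would collide in $HT_X$. No collision therefore means $X$ is a UCC.

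Next we show that $X$ is minimal. Any UCC $Y\subsetneq X$ would hit every edge of $E_A\supseteq E$, so $Y$ would be a hitting set of $\mathcal{H}$. That contradicts $X\in Tr(\mathcal{H})$. Equivalently, Theorem \ref{theorem:partial2full} gives some $h_A\in Tr(\mathcal{H}_A)$ with $X\subseteq h_A$, and since $X$ is itself a UCC, minimality of $h_A$ forces $X=h_A$.

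For completeness and termination, take a minimal UCC $h_A$. By Theorem \ref{theorem:full2partial}, every current $\mathcal{H}$ contains some $h\in Tr(\mathcal{H})$ with $h\subseteq h_A$. Consider the loop's termination condition: either no new candidates appear, or all candidates validate. In either case every element of $Tr(\mathcal{H})$ has been placed in $ST_{ucc}$ and is therefore a true UCC. Since $h$ is a UCC contained in the minimal UCC $h_A$, we get $h=h_A$, so $h_A$ is reported.

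Termination follows from a progress argument. Every invalid candidate $X$ yields conflicting tuples $t_1,t_2$ with $DS(t_1,t_2)\cap X=\emptyset$. So $DS(t_1,t_2)\notin E$, because $X$ hit every edge of $E$, and the new edge strictly enlarges the finite set of edges drawn from $E_A$. Hence the loop runs at most $|E_A|$ times.

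The main obstacle is the completeness step, specifically justifying that at termination all of $Tr(\mathcal{H})$ is in $ST_{ucc}$. The loop exits when $\Delta_{DS}$ is empty, and we need that this really means every candidate validated, so that no invalid candidate was silently dropped. We would argue this from Algorithm \ref{alg:phase3}: any detected collision adds its pairwise difference sets to $\Delta_{DS}$. These sets are nonempty, since the two tuples are distinct, and they are genuinely new by the argument above, so an invalid candidate always produces a nonempty $\Delta_{DS}$. We also need that every remaining candidate meets some NG attribute, so that every candidate lies in some $C_j$. This holds because after FG pruning a nonempty candidate consists only of NG attributes. Finally, the rigorous versions of Theorems \ref{theorem:full2partial} and \ref{theorem:partial2full} used here are the standard facts that any hitting set contains a minimal one, and that a hitting set of $\mathcal{H}_A$ hits every edge of $\mathcal{H}$ because $E\subseteq E_A$. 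We would restate these facts in that form rather than via the enumeration narrative.
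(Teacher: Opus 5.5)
Your proof is correct (up to a degenerate case noted below), but it is organized differently from the paper's. The paper follows each candidate $h\in Tr(\mathcal{H})$ through Theorem \ref{theorem:partial2full}, which is supposed to supply a minimal UCC $h_A\supseteq h$. It then argues that an invalid $h\subset h_A$ is ``expanded'' by new hyperedges until $h=h_A$, reads completeness off Theorem \ref{theorem:full2partial}, and never argues termination. You do three things instead. (i) You prove soundness directly: a reported $X$ is a UCC by Theorem \ref{theorem:candidatepruning} or by run-based hashing, and it is minimal because a UCC $Y\subset X$ would hit every edge of $E\subseteq E_A$, contradicting $X\in Tr(\mathcal{H})$. (ii) You prove termination with a progress measure. (iii) You get completeness from a fixed-point argument: at exit $Tr(\mathcal{H})\subseteq ST_{ucc}$, so the $h\subseteq h_A$ from Theorem \ref{theorem:full2partial} is a UCC and hence equals $h_A$.

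This buys real rigor, because Theorem \ref{theorem:partial2full} is false as stated. Over $\{A_1,A_2\}$ take $t_1=(1,1)$, $t_2=(2,2)$, $t_3=(1,3)$. Then $E_A=\{\{A_1,A_2\},\{A_2\}\}$ and $Tr(\mathcal{H}_A)=\{\{A_2\}\}$. With $E=\{\{A_1,A_2\}\}$, the set $\{A_1\}\in Tr(\mathcal{H})$ lies in no minimal UCC. After it is invalidated, the next enumeration returns only $\{A_2\}$, so $\{A_1\}$ is replaced, not expanded. You should therefore delete your ``Equivalently, Theorem \ref{theorem:partial2full} gives \ldots'' aside. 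Also, ``a hitting set of $\mathcal{H}_A$ hits every edge of $\mathcal{H}$'' is not a rigorous form of that theorem. It is a different, true fact, and it is the one your argument actually uses.

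Two smaller repairs. First, $E$ is kept minimized, so a new edge can evict a superset and the maintained edge set need not grow. State the progress measure as the cumulative set of generated difference sets, which gains $DS(t_1,t_2)$ each round, or as the family of hitting sets of $\mathcal{H}$, which loses $X$.

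Second, your restriction to nonempty candidates needs a hypothesis. Suppose every attribute is a key. Then every attribute is FG with $cmp_i=0$ and $|S_{NG}|=0$, so nothing is compared or sampled, $E=\emptyset$, and MMCS returns $\{\emptyset\}$. The empty candidate is neither pruned nor placed in any $C_j$, so $\Delta_{DS}$ stays empty and DUD returns nothing, although every singleton is a minimal UCC. This is a defect of the algorithm, and the paper's proof has the same hole, so state $E\neq\emptyset$ as an assumption. Relatedly, whether the new difference sets are nonempty is irrelevant: $|\Delta_{DS}|\neq 0$ as soon as one conflicting pair is compared.
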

\begin{proof}	
	Given hypergraph $\mathcal{H}$ used by DUD, $\forall h \in Tr(\mathcal{H})$, as proved in Theorem \ref{theorem:partial2full}, $\exists h_A \in Tr(\mathcal{H}_A)$, $h \subseteq h_A$. If $h = h_A$, it will be reported as true UCCs either by Theorem \ref{theorem:candidatepruning} or hash-based validation. If $h \subset h_A$, since $h_A$ is a minimal hitting set of $\mathcal{H}_A$, i.e., minimal UCC in relation instance \textit{r}, \textit{h} is a non-UCC and the validation process for \textit{h} can find conflicting tuples, which have equal projections on \textit{h}. At least one hyperedge $e_{new}$ will be added to $\mathcal{H}$, which excludes attributes in \textit{h}. In the next execution of phase 2, DUD enumerates the minimal hitting sets on the updated $\mathcal{H}$. For \textit{h} in the current enumeration, \textit{h} should have at least one element with $e_{new}$ and at least one more attribute in $h_A \setminus h$ can be expanded to \textit{h}. The expansion can be continued until $h = h_A$. This guarantees that discovered UCCs in DUD are true UCCs. As proved in Theorem \ref{theorem:full2partial}, $\forall h_A \in Tr(\mathcal{H}_A)$, $\exists h \in Tr(\mathcal{H})$, $h_A \supseteq h$, this actually guarantees that every true UCC will be discovered by DUD. To sum up, DUD discovers the true UCCs correctly.
\end{proof}

\subsection{Cost analysis}

This part analyzes the time complexity of DUD. Three phases are discussed individually and then combined to get the overall complexity.

In phase 1, the execution of DUD consists of several operations sequentially: decomposing relation instance, sorting column files and computing $\gamma$, sorting relation instances and generating difference sets with respect to attributes of FG type, supplement of difference sets by random sampling. Given that $\gamma = 3 \times n \times \log_2n$, DUD generates difference sets within $O(m \times n \times \log_2 n)$ time. And the sorting operation also has a $O(m \times n \times \log_2 n)$ time complexity. The other operations in phase 1 are of linear complexity. Therefore, the time complexity of phase 1 is $O(m \times n \times \log_2 n)$.

The execution of phase 2 is a breadth-first exploration in set enumeration tree $\mathcal{T}$ of \textit{m} attributes with the specified pruning operation. It is proved that given hypergraph $\mathcal{H}(V, E)$, MMCS enumerates minimal hitting sets within $O(\left\Vert E \right\Vert)$ for current visited node in $\mathcal{T}$, where $\left\Vert E \right\Vert$ is the sum of size of hyperedges in \textit{E} \cite{DBLP:journals/dam/MurakamiU14}. Let $a$ be maximum cardinality of any minimal hitting sets, the number of visited nodes in $\mathcal{T}$ is at most $\sum_{i = 0}^{a} \binom{m}{i}$. The time complexity of phase 2 is $O(\left\Vert E \right\Vert \times \sum_{i = 0}^{a} \binom{m}{i})$.

Phase 3 involves sorting operations with respect to attributes of NG type, scans the sorted relation instances sequentially and checks collisions, where the new difference sets are generated by comparing conflicting tuples pairwise. Let $c$ be the maximum distinct value number in attributes of NG type, and $d$ be the maximum number of duplicate projections on candidates in any run. The time complexity of phase 3 is $O(m \times n \times \log_2 n + m \times c \times d^2 \times (1 - P_{pru}) \times \sum_{i = 0}^{a} \binom{m}{i})$. Here, $P_{pru}$ is the proportion of candidates generated in phase 2 which can be reported as true UCCs without validation.

Let $b$ be the number of invocation of phase 3. Overall, taking into account three phases, DUD has a $O(b \times m \times n \times \log_2 n + b \times \left\Vert E \right\Vert \times \sum_{i = 0}^{a} \binom{m}{i} + b \times m \times c \times d^2 \times (1 - P_{pru}) \times \sum_{i = 0}^{a} \binom{m}{i})$ time complexity. It is shown in \cite{DBLP:journals/jcss/BlasiusFLMS22} that the enumeration of minimal hitting sets normally runs fast on hypergraphs stemming from real-life data. This is also verified in the experiments that on all used real-life data sets, maximum size of UCCs is 12, and maximum number of minimal difference sets are 2649. The execution time of phase 2 normally is dominated by other two phases on real-life data sets. The invocation number of phase 3 is rather small, at most 2 in the experiments. The number of \textit{d} usually is very small, since the validated candidates are close to actual UCCs, and they only have a small number of duplicate projection values. Besides, the value of $P_{pru}$ often is greater than 90$\%$ and even up to 100$\%$. On real-life data sets, the first factor $O(b \times m \times n \times \log_2 n)$ is the dominant part in overall time complexity of DUD.

\section{Experimental evaluation} \label{sec:experiments}

This section evaluates the performance of DUD, which is implemented in Java with jdk-21\_windows-x64. The experiments are run on DELL OptiPlex 7010MT Workstation (Intel(R) Core(TM) i9-13900 CPU @ 2.00GHz (24 cores) + 32G memory + 64bit windows 11 + 4T HDD). The performance of DUD is evaluated against HPIValid, which is the state-of-the-art UCC discovery algorithm and is reported to be orders of magnitude faster than related work with a much smaller memory footprint \cite{DBLP:journals/pvldb/BirnickBFNPS20}. We reproduce HPIValid in Java according to the implementation in \cite{DBLP:journals/pvldb/BirnickBFNPS20}. The hash tables and sets involved in the execution utilize HashMap and HashSet in the Java.util package. The sampling exponent is set to 0.3, which is proved to be a good choice for HPIValid \cite{DBLP:journals/pvldb/BirnickBFNPS20}.

In the experiments, we evaluate the performance of DUD on synthetic data sets and real-life data sets. The synthetic data sets (the \textit{lineitem} table) are generated by TPC-H Benchmark\footnote{https://www.tpc.org/tpch/}, while the real-life data sets are downloaded from UCI Machine Learning Repository\footnote{http://archive.ics.uci.edu/} and Kaggle\footnote{https://www.kaggle.com/} to evaluate the performance of DUD in the practical applications. In the experiments, 24GB memory (32GB in total) is allocated for the algorithm execution. The \textit{-Xmx} parameter is used to specify the maximum memory an algorithm can use.

Every data set is preprocessed in the same way before being given to DUD or HPIValid. For each attribute, values are read as strings, and distinct values are collected and mapped to consecutive integer IDs, which then replace the original values. Since UCC discovery only requires checking whether two values on an attribute are equal, this loses no information needed by either algorithm. The resulting table is stored on disk with each attribute value encoded as a fixed-length 8-byte integer, the format read by both DUD and HPIValid throughout the experiments.

The parameters of the lineitem table are described in Table \ref{table:parameter}, whose attribute number is 16 and tuple numbers are a multiple of scale factor (SF), i.e., SF $\times$ 6,000,000 approximately.

Given SF = 10, Table \ref{table:partialgenerationforsf10} provides the details of the lineitem table, including cardinalities, full useful comparison numbers and types of attributes. The cost of computing full useful difference sets with respect to some attributes is prohibitively expensive, as listed in Table \ref{table:partialgenerationforsf10}. Here, we only provide the comparison information about the lineitem table with SF = 10, and the lineitem tables of other SF values have the similar situation. In the experiments, given $\gamma = 3 \times n \times \log_2 n$, the first, second, sixth and sixteenth attributes in the lineitem table are of FG type, and other attributes are of NG type. It is worth noting that the number of distinct difference sets generated for each attribute is significantly less than the number of full useful pairwise comparisons. For example, for the sixth attribute in Table \ref{table:partialgenerationforsf10}, the number of full useful pairwise comparisons is 1,853,382,718, while only 1965 distinct difference sets are generated. That is to say, about $10^6$ pairwise comparisons can generate a new difference set for the sixth attribute. Note that number of distinct difference sets with respect to the specified attribute in Table \ref{table:partialgenerationforsf10} is not the number of minimal difference sets, but just the number of difference sets after deduplication. Given SF = 10, the number of minimal difference sets generated with respect to four attributes of FG type is 198, which is much smaller than that of distinct difference sets.

\textit{NULL semantics}. Some real-life data sets have incomplete information, and a null value is used to indicate a missing or unknown value. This paper considers the commonly used null semantics (\textit{NULL-EQ}), which treats null values as identical and not equal to any non-null value \cite{DBLP:journals/pvldb/Berti-EquilleHN18,DBLP:journals/pvldb/BirnickBFNPS20}. We adopt NULL-EQ to keep the UCC definition consistent with HPIValid and other UCC discovery studies, enabling a direct comparison under the same semantics. Under SQL-style uniqueness semantics, where NULL values are treated as distinct, tuples with NULL values on an attribute would no longer be grouped into the same equivalence class, generally reducing $cmp_i$ and the size of validation batches in phase 3. We have not implemented or evaluated DUD under this alternative semantics.

\begin{table}[h]
	\centering
	\caption{Parameter information of lineitem in TPC-H, where attribute number is 16, $\gamma = 3 \times n \times \log_2 n$, SF means scale factor and UCCNum is the number of discovered UCCs.}
	\small
	\begin{tabular}{c|cccc}
		\hline
		SF & TupleNum & $\gamma$ value & UCCNum\\
		\hline
		1 & 6,001,215 & 405,384,891 & 390 \\ 
		2 & 11,997,996 & 846,445,064 & 404\\
		5 & 29,999,795 & 2,235,445,161 & 364\\
		10 & 59,986,052 & 4,649,781,104 & 342\\
		15 & 89,987,373 & 7,133,268,257 & 318\\
		20 & 119,994,608 & 9,661,387,823 & 358\\
		\hline
	\end{tabular}
	\label{table:parameter}
\end{table}

\begin{table}[h]
	\centering
	\caption{Cardinalities and types of attributes in lineitem, where SF = 10 and $\gamma = 4,649,781,104$, DSN is the number of distinct difference sets generated with respect to the specified attribute.}
	\small
	\begin{tabular}{c|cccc}
		\hline
		Attr & Cardinality & TotalComparison & Type & DSN \\
		\hline
		1 & 15,000,000 & 119,948,344 & \textbf{FG} & 757 \\ 
		2 & 2,000,000 & 898,711,177 & \textbf{FG} & 1719\\
		3 & 100,000 & 17,991,006,982 & NG & -\\
		4 & 7 & 321,330,030,852,817 & NG & -\\
		5 & 50 & 35,983,264,280,805 & NG & -\\
		6 & 1,351,462 & 1,853,382,718 & \textbf{FG} & 1965\\
		7 & 11 & 163,560,278,023,276 & NG & -\\ 
		8 & 9 & 199,907,039,167,641 & NG & -\\
		9 & 3 & 680,505,080,028,315 & NG & -\\
		10 & 2 & 899,581,606,017,474 & NG & -\\
		11 & 2,526 & 735,235,731,620 & NG & -\\
		12 & 2,466 & 741,466,394,917 & NG & -\\
		13 & 2,555 & 734,875,394,017 & NG & -\\ 
		14 & 4 & 449,790,785,523,119 & NG & -\\
		15 & 7 & 257,023,299,207,497 & NG & -\\
		16 & 34,378,943 & 2,270,982,389 & \textbf{FG} & 1555\\
		\hline
	\end{tabular}
	\label{table:partialgenerationforsf10}
\end{table}

\begin{figure}
	\centering
	\includegraphics[scale = 0.8]{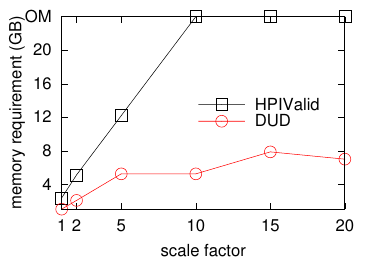}
	\caption{Memory space requirement.}
	\label{fig:memoryspace}
\end{figure}

\subsection{Memory Space Requirement} \label{sec:exp:memoryspace}

This part illustrates the memory space required for HPIValid. Given the lineitem tables of SF = 1, 2, 5, 10, 15 and 20, Figure \ref{fig:memoryspace} illustrates the memory consumption under the space limit 24GB. It is shown that HPIValid requires about 2.40GB, 5.14GB and 12.21GB for lineitem tables with SF = 1, 2 and 5 respectively, and runs out of memory for SF $\ge$ 10. ``OM'' in Figure \ref{fig:memoryspace} indicates out of memory. In our implemented HPIValid, the record-ID-to-cluster-ID inverse mapping required for validation is represented by the same primitive \texttt{int[][]} array used to store the cluster ID of each tuple on each attribute, rather than a separate hash table. The array is built during pre-processing and used both for difference-set computation and for validation.

The reported memory measurements are obtained by first triggering garbage collection and then computing the difference between total and free JVM heap memory. Without this step, objects that are no longer used but have not yet been collected would still appear as used memory, making the reported number higher than the actual memory in use. To avoid triggering garbage collection too often, it is only performed when the run is configured to report memory consumption, and only at points where heap usage exceeds the highest value recorded so far.

For comparison, the memory space requirement for DUD also is depicted in Figure \ref{fig:memoryspace}. For DUD, it only needs to sort the column files and relation instance, keep the difference sets, build the hypergraph, maintain tuples in each run and the hash tables which store projections on the UCC candidates for the tuples with equal attribute values to check duplicates. These operations are executed sequentially and independently. Hence, DUD has a lower memory overhead, less than 8GB memory at SF $\le$ 20. In the experiments, we limit the maximum size of each sorted partition to be one quarter of the allocated memory. Since the sorting operation occupies the largest memory space in the execution, the memory required for sorting remains bounded on data sets of larger scale. The remaining variation in memory consumption of DUD across different SF values comes from data-dependent factors including the size of difference sets, the candidate set, and the hash tables used during validation. We discuss this issue further in Section \ref{sec:exp:discussion}.

\begin{figure}
	\centering
	\includegraphics[scale = 0.8]{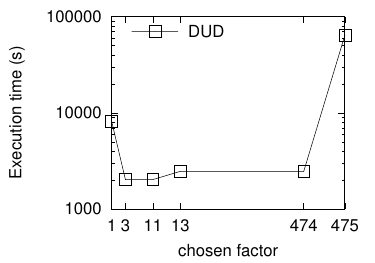}
	\caption{Different factor selections for $\gamma$ value, the experiments are executed on the lineitem table of SF = 10.}
	\label{fig:factorChoice}
\end{figure}

\begin{table}[h]
	\centering
	\caption{Execution time in seconds of three phases of DUD with different factors, which is the multiples of $n \times \log_2 n$}
	\small
	\begin{tabular}{c|cccc}
		\hline
		Factor & Phase1 & Phase 2 & Phase 3 & $S_{FG}$ \\
		
		\hline
		1 & 610.55 & 0.09 & 7571.42 & 1,2\\
		3 & 1217.01 & 0.11 & 811.69 & 1,2,6,16\\
		13 & 2130.65 & 0.11 & 342.34 & 1,2,3,6,16\\
		475 & 64123.34 & 0.04 & 0.004 & 1,2,3,6,11,13,16\\
		\hline
	\end{tabular}
	\label{table:factorChoiceDetail}
\end{table}

\subsection{Factor selection of $\gamma$ value} \label{sec:exp:factorselection}

In this paper, DUD uses $3 \times n \times \log_2n$ as $\gamma$ value. The choice of $n \times \log_2 n$ is relatively straightforward since its growing rate is substantially lower than $O(n^2)$, which is the upper-bound of pairwise comparison number. The choice of factor 3 is explained in this part through experimental verification on the lineitem table of SF = 10. The selected factors include 1, 3, 11, 13, 474 and 475, which comprehensively reflects the different cases for DUD execution. As illustrated in Figure \ref{fig:factorChoice}, the execution time of DUD first decreases, next remains unchanged, then increases slightly and remains unchanged again, last shows a rapid growth. The time decomposition of three phases for different factors is listed in Table \ref{table:factorChoiceDetail}. 

As we expect, the execution time of phase 2 is negligible in the experiments. When the factor is lower, fewer attributes are of FG type. In the case, phase 1 terminates quickly, but phase 3 needs a longer time to validate the candidates because the pruning effect of Theorem \ref{theorem:candidatepruning} declines. On the contrary, when the factor is higher, more attributes are of FG type. Of course, phase 1 may require a much longer time to terminate, but phase 3 may end quickly. 

It is also shown that the execution time of DUD remains unchanged across a range of selected factors. This can be explained that, the required full useful pairwise comparison numbers have relatively large gaps for different attributes. If the $\gamma$ values for the selected factors lie in the same gap, they do not affect the execution of DUD. For example, when the factor is chosen in $[3, 11]$, there are four attributes of FG type.

Taken together, when factor for $\gamma$ value is set to 3, the overall performance of DUD does the best in Figure \ref{fig:factorChoice}, and $\gamma$ is set to $3 \times n \times \log_2 n$ in the experiments.

\begin{figure}
	\centering
	\renewcommand{\thesubfigure}{}
	\subfigure[(a) Execution time]{
		\includegraphics[scale = 0.63]{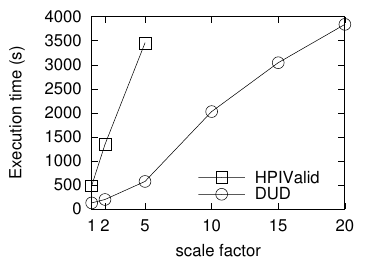}}
	\subfigure[(b) Time of three phases]{
		\includegraphics[scale = 0.63]{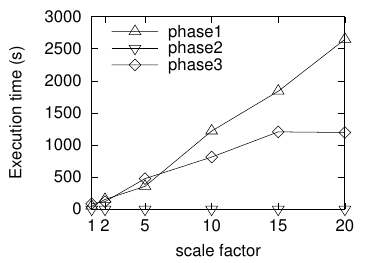}}
	\subfigure[(c) Retrieved I/O cost]{
		\includegraphics[scale = 0.63]{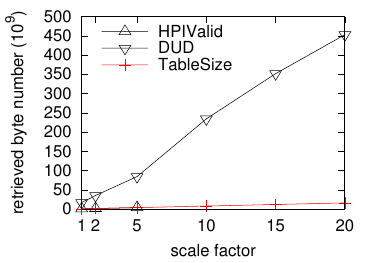}}
	\subfigure[(d) Pairwise comparison]{
		\includegraphics[scale = 0.63]{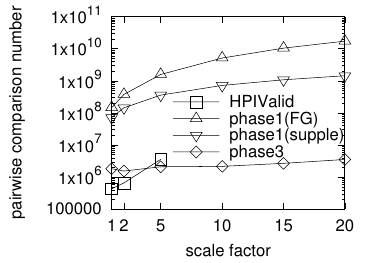}}	
	\subfigure[(e) UCC candidate number]{
		\includegraphics[scale = 0.63]{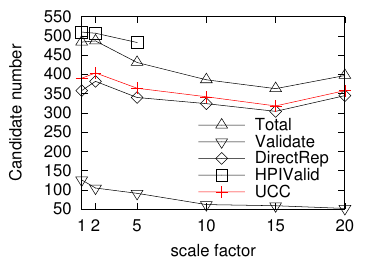}}	
	\subfigure[(f) Maintained tuple number]{
		\includegraphics[scale = 0.63]{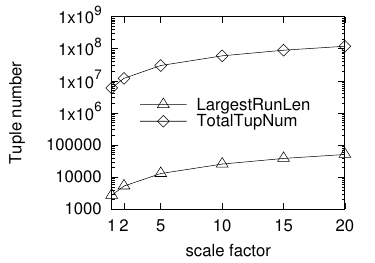}}	
	\caption{The illustration of varying tuple number.}
	\label{fig:tupnum}
\end{figure}

\begin{table}[h]
	\centering
	\caption{Time decomposition of phase 1 in experiment 1, where ``Decom" is the operation to decompose row-table into column files, ``DetType" is the operation to sort column files, compute $\gamma$ value and determine attribute type, ``Sort" is the operation to sort relation instance with respect to attributes of FG type, ``GenDS" is the operation to generate full useful difference sets with respect to attributes of FG type, ``Supple" is the operation to sample relation instance to supplement the difference sets.}
	\small
	\begin{tabular}{c|ccccc}
		\hline
		SF & Decom & DetType & Sort & GenDS & Supple\\
		\hline
		1 & 8.527 & 15.817 & 23.272 & 11.525 & 3.433 \\ 
		2 & 13.994 & 38.596 & 62.855 & 39.699 & 7.399 \\
		5 & 81.433 & 51.39 & 133.468 & 74.347 & 17.12\\
		10 & 232.753 & 102.72 & 545.702 & 266.968 & 33.269\\
		15 & 304.975 & 159.96 & 888.105 & 526.109 & 49.115\\
		20 & 374.24 & 219.773 & 1189.126 & 815.548 & 65.391\\
		\hline
	\end{tabular}
	\label{table:tupnumphase1decomposition}
\end{table}

\subsection{Experiment 1: effect of varying tuple numbers}

Experiment 1 evaluates the performance of DUD on the lineitem tables of SF = 1, 2, 5, 10, 15 and 20, respectively. Since the memory space required by HPIValid exceeds the allocated memory capacity for SF $\ge$ 10, experiment 1 reports the experimental results of HPIValid with SF = 1, 2 and 5. 

As shown in Figure \ref{fig:tupnum}(a), with a greater value of SF, execution time of DUD increases gradually, but with a much lower growing trend compared with HPIValid. The decomposition of execution times of three phases is reported in Figure \ref{fig:tupnum}(b), and the further decomposition of phase 1 is listed in Table \ref{table:tupnumphase1decomposition}. By the reported results, DUD runs faster than HPIValid on the lineitem table of SF $\le$ 5, with speedups ranging from 3.7$\times$ at SF = 1 to 6.0$\times$ at SF = 5. The significant speedup is attributed to several factors. Firstly, through sorting primitive, DUD determines the attributes of FG type and builds the useful difference sets with a relatively low cost in phase 1. Furthermore, candidate pruning and hash-based batch validation in DUD go quite a long way. Most of candidates can be reported as true UCCs directly without validation, and several candidates can be verified in one iteration. As shown in Figure \ref{fig:tupnum}(e), about 90$\%$ candidates generated by phase 2 are removed from validation since they are proved to be UCC by Theorem \ref{theorem:candidatepruning}. This reduces the validation cost significantly. Just as we anticipate, the execution cost of phase 2, i.e., minimal hitting set enumeration, is very low, less than 0.2s in experiment 1. This also verifies rationality of repeatedly calling minimal hitting set enumeration to solve the problem incurred by possible partial generation of difference sets. 

As shown in Figure \ref{fig:tupnum}(c), DUD retrieves one order of magnitude times more data than the original table. It is interesting to find that \textit{high cost of memory computation often exceeds the cost of multiple sequential scans on disk}. The I/O procedure of DUD is detailed in Section \ref{sec:dud}. In this paper, sorting is treated as an individual operation, which sorts the input file to output sorted file, and the following operation retrieves the outputted sorted file directly. Although this increases the involved I/O cost, it simplifies the algorithm execution.

Due to the enough number of the computed difference sets, the generated UCC candidates are true UCCs or are close to true UCCs, and the latter only have a small number of duplicate projection values. The required numbers of pairwise comparisons are illustrated in Figure \ref{fig:tupnum}(d). The pairwise comparison number in phase 3 is rather low compared with that in phase 1, this also indicates the number of duplicate projection values in phase 3 is very low. 

As illustrated in Figure \ref{fig:tupnum}(f), the maximum number of tuples maintained in validation at a time is three orders of magnitude fewer than total tuple number. Even at SF = 20, the lineitem table has 119,994,608 tuples and only at most 50,554 tuples are maintained in memory for validation at a time.

For SF = 1, 2 and 5, the pairwise comparison numbers of HPIValid are reported in Figure \ref{fig:tupnum}(d). Obviously, the total comparison number of DUD is much larger than that of HPIValid. The overwhelming majority of pairwise comparisons of DUD are executed in phase 1, i.e., generating all useful difference sets with respect to attributes of FG type and supplement of difference by sampling. When performed on data set of small and medium scale, HPIValid has a lower I/O cost than DUD  and a less pairwise computation number. However, it still runs several times slower than DUD at SF = 1, 2 and 5. This can be explained for a variety of reasons. Firstly, the less pairwise comparison number does not mean a lower computation cost, but means a small number of duplicate projections on the candidates. Secondly, HPIValid validates each candidate via stripped partitions, whose cost is nontrivial, while DUD adopts a batch validation to check candidates in single validation process. Thirdly, HPIValid has to validate much more candidates than DUD, as shown in Figure \ref{fig:tupnum}(e), since it lacks a candidate pruning operation as in DUD. 

\begin{figure}
	\centering
	\renewcommand{\thesubfigure}{}
	\subfigure[(a) Execution time]{
		\includegraphics[scale = 0.63]{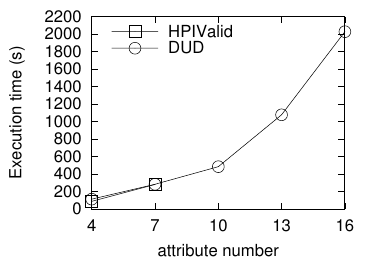}}
	\subfigure[(b) Time of three phases]{
		\includegraphics[scale = 0.63]{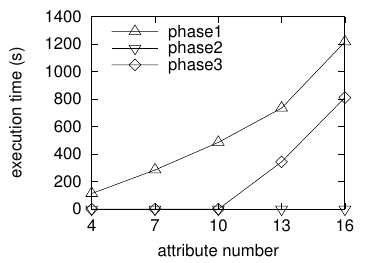}}
	\subfigure[(c) Retrieved I/O cost]{
		\includegraphics[scale = 0.63]{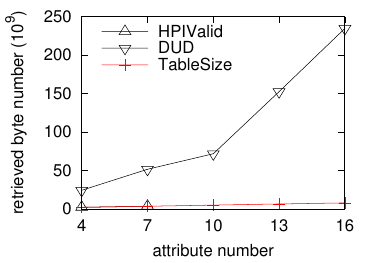}}
	\subfigure[(d) Pairwise comparison]{
		\includegraphics[scale = 0.63]{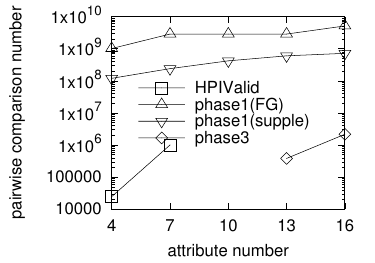}}	
	\subfigure[(e) UCC candidate number]{
		\includegraphics[scale = 0.63]{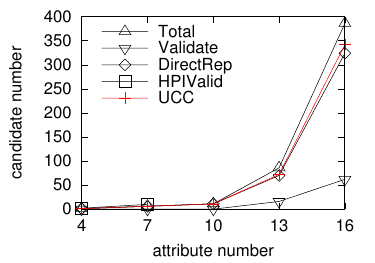}}	
	\subfigure[(f) Maintained tuple number]{
		\includegraphics[scale = 0.63]{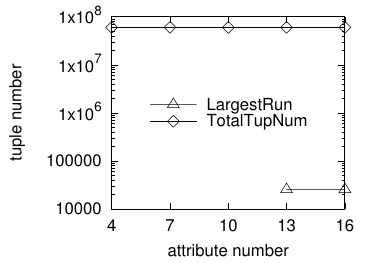}}	
	\caption{The illustration of varying attribute number.}
	\label{fig:attnum}
\end{figure}

\subsection{Experiment 2: effect of varying attribute numbers}

Experiment 2 evaluates the performance of DUD on the lineitem tables of SF = 10 with the first 4, 7, 10, 13 and 16 attributes, respectively. Since HPIValid runs out of memory for attribute numbers of 10, 13 and 16, the experimental results of HPIValid are only reported for attribute numbers of 4 and 7.

Figure \ref{fig:attnum}(a) illustrates the execution time of DUD in experiment 2. It is interesting to note that when attribute number is no greater than 10, DUD does not invoke hash-based validation at all. The reason is that, in such cases, all UCC candidates generated in phase 2 contain at least one attribute of FG type, and are proved to be true UCC according to Theorem \ref{theorem:candidatepruning}. Thus, no further validation is required for these candidates. This is a great advantage of DUD. For the data sets whose UCC candidates contain attributes of FG type, the UCCs can be discovered quickly, since phase 3 is not invoked at all and the hash-based validation does not need to be executed. This also is verified in Figure \ref{fig:attnum}(b), Figure \ref{fig:attnum}(c) and Figure \ref{fig:attnum}(d). 

As shown in Figure \ref{fig:attnum}(a), HPIValid completes for attribute numbers of 4 and 7, but runs out of memory for larger attribute numbers. When $m = 4$, HPIValid runs faster than DUD (88.70s vs. 113.86s). The PLI-based structures fit within the allocated memory, so HPIValid can validate candidates directly from memory without the disk-based sorting cost that DUD incurs. When $m = 7$, the two algorithms take similar time (281.67s vs. 284.79s). For $m \ge 10$, HPIValid runs out of memory while DUD still completes.

When attribute number increases from 10 to 13, there exist some UCC candidates only consisting of attributes of NG type, the execution time increases abruptly since the validation operation has to be performed. As shown in Figure \ref{fig:attnum}(e), with more attributes involved, the numbers of total candidates and true UCCs increase quickly. Correspondingly, more candidates are reported as true UCCs directly without validation, and the number of candidates to be validated shows a smaller increase trend.

The maximum number of tuples that are maintained in validation at a time is illustrated in Figure \ref{fig:attnum}(f). Compared with total tuples, DUD only needs to keep a much small number of tuples in memory. The missing points for number of maintained tuples in Figure \ref{fig:attnum}(f) are due to the fact that DUD does not invoke validation when the attribute number is no more than 10.

\begin{table*}[thp]
	\centering
	\caption{Experimental results on real-life data sets (runtime in seconds for HPIValid and DUD). The boldfaced results for DUD mean that the performance of DUD on these data sets is better than HPIValid. Breakdown shows the proportions of three phases of DUD, where yellow, green and red regions correspond to phase 1, phase 2 and phase 3, respectively. $|S_{FG}|$ is the number of attributes of FG type. OM is the abbreviation of OutOfMemoryError. Some original data sets (including barcrawl, higgs, sdss) contain duplicate tuples and we pre-process them by deduplication.}
	\setlength{\tabcolsep}{3pt}
	\small
	\begin{tabular}{c|ccc|cc|ccccc}
		\hline
		Data set & \textit{n} & \textit{m} & UCCs & HPIValid & DUD & Breakdown($\%$) & IOtimes & $|S_{FG}|$ & $P_{pru}$ \\
		\hline
		
		reflns & 24,769 & 37 & 17 & 0.26 & 0.65 & 
		\begin{tikzpicture}
			\draw[gray, fill=yellow] (0,0) rectangle (1.986644407345576, 0.2);
			\draw[gray, fill=green] (1.986644407345576, 0.0) rectangle (1.993322203672788, 0.2);
			\draw[gray, fill=red] (1.993322203672788, 0.0) rectangle (2, 0.2);
		\end{tikzpicture} & 16.89 & 4 & 1.0 \\
		
		entytysrcgen & 26,139 & 46 & 3 & 0.25 & 1.29 & 
		\begin{tikzpicture}
			\draw[gray, fill=yellow] (0,0) rectangle (1.9955752212389382, 0.2);
			\draw[gray, fill=green] (1.9955752212389382, 0.0) rectangle (1.997787610619469, 0.2);
			\draw[gray, fill=red] (1.997787610619469, 0.0) rectangle (2, 0.2);
		\end{tikzpicture} & 13.91 & 3 & 1.0\\ 
		
		atom & 160,000 & 31 & 691 & 16.61 & \textbf{5.73} & 
		\begin{tikzpicture}
			\draw[gray, fill=yellow] (0,0) rectangle (0.5832359813084111, 0.2);
			\draw[gray, fill=green] (0.5832359813084111, 0.0) rectangle (0.5908294392523364, 0.2);
			\draw[gray, fill=red] (0.5908294392523364, 0.0) rectangle (2, 0.2);
		\end{tikzpicture} & 31.87 & 5 & 0.712\\ 
		
		sg$\_$bioentry & 184,292 & 9 & 3 & 0.31 & 0.59 & 
		\begin{tikzpicture}
			\draw[gray, fill=yellow] (0,0) rectangle (1.9898305084745762, 0.2);
			\draw[gray, fill=green] (1.9898305084745762, 0.0) rectangle (1.9932203389830507, 0.2);
			\draw[gray, fill=red] (1.9932203389830507, 0.0) rectangle (2, 0.2);
		\end{tikzpicture} & 16.6 & 4 & 1.0 \\
		
		rangequeriesaggr & 200,000 & 8 & 19 & 0.66 & 0.84 & 		\begin{tikzpicture}
			\draw[gray, fill=yellow] (0,0) rectangle (1.9911894273127753, 0.2);
			\draw[gray, fill=green] (1.9911894273127753, 0.0) rectangle (1.9955947136563876, 0.2);
			\draw[gray, fill=red] (1.9955947136563876, 0.0) rectangle (2, 0.2);
		\end{tikzpicture} & 28.55 & 8 & 1.0\\		
		
		nba2023shots & 217,207 & 26 & 64 & 3.48 & 14.33 & 
		\begin{tikzpicture}
			\draw[gray, fill=yellow] (0,0) rectangle (1.8100340259176138, 0.2);
			\draw[gray, fill=green] (1.8100340259176138, 0.0) rectangle (1.8107579816115253, 0.2);
			\draw[gray, fill=red] (1.8107579816115253, 0.0) rectangle (2, 0.2);
		\end{tikzpicture} & 16.85 & 3 & 0.388\\ 		
		
		sgemm$\_$product & 241,600 & 18	& 36 & 2.12 & \textbf{2.07} & 
		\begin{tikzpicture}
			\draw[gray, fill=yellow] (0,0) rectangle (1.8892760356174991, 0.2);
			\draw[gray, fill=green] (1.8892760356174991, 0.0) rectangle (1.8931475029036005, 0.2);
			\draw[gray, fill=red] (1.8931475029036005, 0.0) rectangle (2, 0.2);
		\end{tikzpicture} & 17.78 & 4 & 0.972\\		
		
		prsa & 420,768 & 18 & 378 & 22.09 & 38.64 & 
		\begin{tikzpicture}
			\draw[gray, fill=yellow] (0,0) rectangle (0.09729811396771837, 0.2);
			\draw[gray, fill=green] (0.09729811396771837, 0.0) rectangle (0.10097449197995617, 0.2);
			\draw[gray, fill=red] (0.10097449197995617, 0.0) rectangle (2, 0.2);
		\end{tikzpicture} & 23.78 & 1 & 0.001\\
		
		onlineretail & 536,641 & 8 & 1 & 0.69 & 1.68 & 
		\begin{tikzpicture}
			\draw[gray, fill=yellow] (0,0) rectangle (1.9977426636568851, 0.2);
			\draw[gray, fill=green] (1.9977426636568851, 0.0) rectangle (1.9988713318284426, 0.2);
			\draw[gray, fill=red] (1.9988713318284426, 0.0) rectangle (2, 0.2);
		\end{tikzpicture} & 10.55 & 2 & 1.0\\ 
		
		struct$\_$sheet$\_$range & 664,128 & 32 & 167 & 16.72 & \textbf{13.63} & \begin{tikzpicture}
			\draw[gray, fill=yellow] (0,0) rectangle (1.997393853838636, 0.2);
			\draw[gray, fill=green] (1.997393853838636, 0.0) rectangle (1.9989141057660982, 0.2);
			\draw[gray, fill=red] (1.9989141057660982, 0.0) rectangle (2, 0.2);
		\end{tikzpicture} & 49.87 & 15 & 1.0 \\
		
		ht$\_$sensor & 928,991 & 12 & 147 & 14.674 & \textbf{8.96} & 
		\begin{tikzpicture}
			\draw[gray, fill=yellow] (0,0) rectangle (1.9956906870960023, 0.2);
			\draw[gray, fill=green] (1.9956906870960023, 0.0) rectangle (1.9990423749102229, 0.2);
			\draw[gray, fill=red] (1.9990423749102229, 0.0) rectangle (2, 0.2);
		\end{tikzpicture} & 37.69 & 11 & 1.0 \\ 		
		
		foursquare$\_$spots & 957,462 & 15 & 1 & 2.03 & 3.51 & 
		\begin{tikzpicture}
			\draw[gray, fill=yellow] (0,0) rectangle (1.9978575254418853, 0.2);
			\draw[gray, fill=green] (1.9978575254418853, 0.0) rectangle (1.998393144081414, 0.2);
			\draw[gray, fill=red] (1.998393144081414, 0.0) rectangle (2, 0.2);
		\end{tikzpicture} & 16.75 & 4 & 1.0 \\ 		
		
		har70plus & 2,259,597 & 8 & 1 & 2.49 & \textbf{2.37} & 
		\begin{tikzpicture}
			\draw[gray, fill=yellow] (0,0) rectangle (1.9977502812148484, 0.2);
			\draw[gray, fill=green] (1.9977502812148484, 0.0) rectangle (1.998875140607424, 0.2);
			\draw[gray, fill=red] (1.998875140607424, 0.0) rectangle (2, 0.2);
		\end{tikzpicture} & 7.55 & 1 & 1.0 \\ 
		
		bitcoinheist & 2,916,697 & 10 & 2 & 5.67 & \textbf{4.51} & 
		\begin{tikzpicture}
			\draw[gray, fill=yellow] (0,0) rectangle (1.998650472334683, 0.2);
			\draw[gray, fill=green] (1.998650472334683, 0.0) rectangle (1.9993252361673415, 0.2);
			\draw[gray, fill=red] (1.9993252361673415, 0.0) rectangle (2, 0.2);
		\end{tikzpicture} & 7.63 & 1 & 1.0 \\ 		
		
		pamap2 & 3,850,505 & 54 & 1206 & 838.83 & \textbf{704.21} & 
		\begin{tikzpicture}
			\draw[gray, fill=yellow] (0,0) rectangle (1.9998617107247654, 0.2);
			\draw[gray, fill=green] (1.9998617107247654, 0.0) rectangle (1.9999942379468654, 0.2);
			\draw[gray, fill=red] (1.9999942379468654, 0.0) rectangle (2, 0.2);
		\end{tikzpicture} & 115.92 & 37 & 1.0 \\ 
		
		ditag$\_$feature & 3,960,124 & 13 & 3 & 9.72 & 28.95 & \begin{tikzpicture}
			\draw[gray, fill=yellow] (0,0) rectangle (1.999811504508184, 0.2);
			\draw[gray, fill=green] (1.999811504508184, 0.0) rectangle (1.9998743363387892, 0.2);
			\draw[gray, fill=red] (1.9998743363387892, 0.0) rectangle (2, 0.2);
		\end{tikzpicture} & 16.71 & 4 & 1.0 \\ 
		
		ssdp$\_$flood & 4,077,266 & 115 & 392 & OM & \textbf{1956.1} & \begin{tikzpicture}
			\draw[gray, fill=yellow] (0,0) rectangle (1.9999790339142163, 0.2);
			\draw[gray, fill=green] (1.9999790339142163, 0.0) rectangle (1.9999979033914217, 0.2);
			\draw[gray, fill=red] (1.9999979033914217, 0.0) rectangle (2, 0.2);
		\end{tikzpicture} & 169.96 & 55 & 1.0 \\
		
		susy & 5,000,000 & 19 & 746 & 294.54 & \textbf{148.76} & 
		\begin{tikzpicture}
			\draw[gray, fill=yellow] (0,0) rectangle (1.9994515940333433, 0.2);
			\draw[gray, fill=green] (1.9994515940333433, 0.0) rectangle (1.999951252802964, 0.2);
			\draw[gray, fill=red] (1.999951252802964, 0.0) rectangle (2, 0.2);
		\end{tikzpicture} & 55.8 & 17 & 1.0 \\		
		
		sdss & 6,949,386 & 8 & 26 & 40.88 & 54.41 & 
		\begin{tikzpicture}
			\draw[gray, fill=yellow] (0,0) rectangle (1.9997139383537155, 0.2);
			\draw[gray, fill=green] (1.9997139383537155, 0.0) rectangle (1.9998569691768577, 0.2);
			\draw[gray, fill=red] (1.9998569691768577, 0.0) rectangle (2, 0.2);
		\end{tikzpicture} & 28.55 & 8 & 1.0 \\						
		
		ids2018 & 7,948,746 & 84 & 48506 & OM & \textbf{1094.5} & 
		\begin{tikzpicture}
			\draw[gray, fill=yellow] (0,0) rectangle (1.998450178090522, 0.2);
			\draw[gray, fill=green] (1.998450178090522, 0.0) rectangle (1.999895751889497, 0.2);
			\draw[gray, fill=red] (1.999895751889497, 0.0) rectangle (2, 0.2);
		\end{tikzpicture} & 10.95 & 2 & 1.0 \\ 
		
		nycparkticket2015 & 9,711,094 & 40 & 1 & 51.98 & 265.46 & \begin{tikzpicture}
			\draw[gray, fill=yellow] (0,0) rectangle (1.999973255357843, 0.2);
			\draw[gray, fill=green] (1.999973255357843, 0.0) rectangle (1.999991085119281, 0.2);
			\draw[gray, fill=red] (1.999991085119281, 0.0) rectangle (2, 0.2);
		\end{tikzpicture} & 10.91 & 2 & 1.0 \\ 
		
		hepmass & 10,500,000 & 29 & 1889 & 1030.78 & \textbf{682.87} & 
		\begin{tikzpicture}
			\draw[gray, fill=yellow] (0,0) rectangle (1.9994540606394964, 0.2);
			\draw[gray, fill=green] (1.9994540606394964, 0.0) rectangle (1.9999856331747237, 0.2);
			\draw[gray, fill=red] (1.9999856331747237, 0.0) rectangle (2, 0.2);
		\end{tikzpicture} & 67.86 & 21 & 1.0 \\ 
		
		higgs & 10,721,302 & 29 & 8121 & 11172.91 & \textbf{3019.6} & 
		\begin{tikzpicture}
			\draw[gray, fill=yellow] (0,0) rectangle (0.2576965844842124, 0.2);
			\draw[gray, fill=green] (0.2576965844842124, 0.0) rectangle (0.2594905113363411, 0.2);
			\draw[gray, fill=red] (0.2594905113363411, 0.0) rectangle (2, 0.2);
		\end{tikzpicture} & 55.86 & 8 & 0.707 \\ 
		
		phaccelerometer & 13,062,475 & 10 & 12 & 39.64 & 67.55 & \begin{tikzpicture}
			\draw[gray, fill=yellow] (0,0) rectangle (1.9998581962563813, 0.2);
			\draw[gray, fill=green] (1.9998581962563813, 0.0) rectangle (1.9998581962563813, 0.2);
			\draw[gray, fill=red] (1.9998581962563813, 0.0) rectangle (2, 0.2);
		\end{tikzpicture} & 13.63 & 3 & 1.0 \\ 
		
		barcrawl & 14,057,535 & 5 & 1 & 19.32 & \textbf{12.37} & 
		\begin{tikzpicture}
			\draw[gray, fill=yellow] (0,0) rectangle (1.9996853366897422, 0.2);
			\draw[gray, fill=green] (1.9996853366897422, 0.0) rectangle (1.9998426683448711, 0.2);
			\draw[gray, fill=red] (1.9998426683448711, 0.0) rectangle (2, 0.2);
		\end{tikzpicture} & 7.33 & 1 & 1.0\\ 
		
		hhar & 26,995,107 & 10 & 7 & 84.78 & 165.71 & \begin{tikzpicture}
			\draw[gray, fill=yellow] (0,0) rectangle (1.9999598444642246, 0.2);
			\draw[gray, fill=green] (1.9999598444642246, 0.0) rectangle (1.9999732296428163, 0.2);
			\draw[gray, fill=red] (1.9999732296428163, 0.0) rectangle (2, 0.2);
		\end{tikzpicture} & 13.63 & 3 & 1.0\\ 
		
		ghtorrent & 73,749,949 & 11 & 2 & OM & \textbf{646.77} & 
		\begin{tikzpicture}
			\draw[gray, fill=yellow] (0,0) rectangle (1.496470151676794, 0.2);
			\draw[gray, fill=green] (1.496470151676794, 0.0) rectangle (1.4964732439661705, 0.2);
			\draw[gray, fill=red] (1.4964732439661705, 0.0) rectangle (2, 0.2);
		\end{tikzpicture} & 10.67 & 1 & 0.5\\ 
		
		hi-large$\_$trans & 179,702,079 & 11 & 3 & OM & \textbf{6447.6} & \begin{tikzpicture}
			\draw[gray, fill=yellow] (0,0) rectangle (1.9999972665779013, 0.2);
			\draw[gray, fill=green] (1.9999972665779013, 0.0) rectangle (1.9999978740050344, 0.2);
			\draw[gray, fill=red] (1.9999978740050344, 0.0) rectangle (2, 0.2);
		\end{tikzpicture} & 19.66 & 3 & 1.0 \\ 
		
		\hline
	\end{tabular}
	\label{table:realtableresults}
\end{table*}

\subsection{Experiment 3: the real-life data sets}

Experiment 3 evaluates the performance of DUD on real-life data sets, which range from small scale (24K tuples) to large scale (179M tuples), from low dimension (5) to high dimension (115), representing the practical data characteristics comprehensively. The execution results of the algorithms are reported in Table \ref{table:realtableresults}, where \textit{IOtimes} is the ratio of the retrieved byte number of DUD over original table size, \textit{breakdown} shows the time proportions of three phases for DUD, $|S_{FG}|$ is the number of attributes of FG type.

On the real-life data sets whose tuple numbers are less than $10^6$, HPIValid only needs to retrieve data sets once, while DUD often involves an order of magnitude higher I/O cost. The I/O cost of a sorting operation is at least twice the size of the relation. When data scale is small and the stripped partitions fit in memory, HPIValid benefits from both lower I/O cost and efficient in-memory validation, and tends to run faster than DUD. DUD performs better when the number of attributes is large, as seen on atom ($m = 31$) and struct\_sheet\_range ($m = 32$).

On some larger real-life data sets where the PLI-based structures of HPIValid fit within the allocated memory, HPIValid tends to outperform DUD, as seen on nycparkticket2015, hhar, phaccelerometer, foursquare\_spots, ditag\_feature and sdss. On these data sets, DUD incurs the full cost of disk-based sorting and scanning, while HPIValid validates candidates directly from its in-memory structures. 

On pamap2 ($m = 54$), hepmass ($m = 29$), higgs ($m = 29$), susy ($m = 19$) and ht\_sensor ($m = 12$), HPIValid still completes but DUD runs faster. This is mainly because DUD prunes a large proportion of candidates on these data sets; the reduction in validation cost outweighs the cost of disk-based sorting and scanning. When the PLI-based structures exceed the memory budget, HPIValid runs out of memory while DUD still completes. This is the case for ssdp\_flood ($m = 115$), ids2018 ($m = 84$), ghtorrent ($n = 73.7$M) and hi-large\_trans ($n = 179.7$M). DUD is not uniformly faster than HPIValid. Its advantage is most significant on data sets where PLI-based structures are too large to fit in memory.

The last column $P_{pru}$ in Table \ref{table:realtableresults} measures the proportion of the UCC candidates which are reported as true UCCs without validation. Obviously, all generated UCC candidates can be declared as true UCCs in the very great majority of cases. A notable exception is the prsa data set , which only reports one UCC candidate as true UCC without validation. On prsa, the first attribute has much greater cardinality (35064) than the maximum cardinality (2028) of other attributes. Thus $cmp_1$ on prsa is much smaller than $cmp_i (2 \le i \le 18)$. Given $\gamma$ value, only the first attribute is of FG type. For prsa, $\{ A_1, A_{18} \}$ is the only UCC containing $A_1$, and other UCCs have to be validated since they do not include $A_1$. This also explains the so large proportion of phase 3 for prsa execution. 

In a few rare cases, including three real-life data sets (NBA2023Shots, IDS2018 and HI-Large$\_$Trans), we have $\min\limits_{1 \le i \le m} cmp_i > 3 \times n \times \log_2 n$, there is no attribute of FG type if $\gamma$ is set to $3 \times n \times \log_2 n$, and $\gamma$ is set to $5 \times \min\limits_{1 \le i \le m} cmp_i$. In a word, parameter $\gamma$ is determined for DUD to guarantee at least one attribute of FG type with a relative low cost. It should be noted that, all factors for $n \times \log_2 n$ in the three data sets are greater than 3 (default factor) and are no more than 30. The computed $\gamma$ value is still in the same magnitude of the default $\gamma$ value ($3 \times n\times \log_2 n$).

\begin{figure*}
	\centering
	\renewcommand{\thesubfigure}{}
	\subfigure[(a) Execution time]{
		\includegraphics[scale = 0.63]{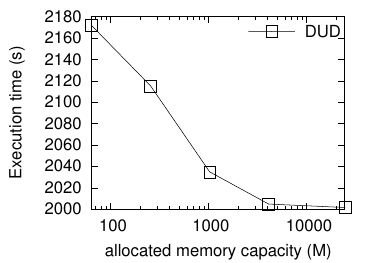}}
	\subfigure[(b) Memory requirement]{
		\includegraphics[scale = 0.63]{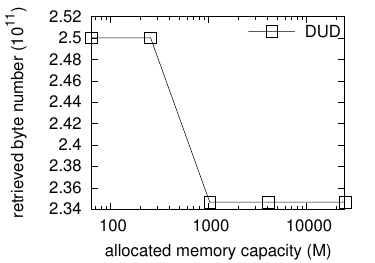}}
	\subfigure[(c) Memory requirement]{
		\includegraphics[scale = 0.63]{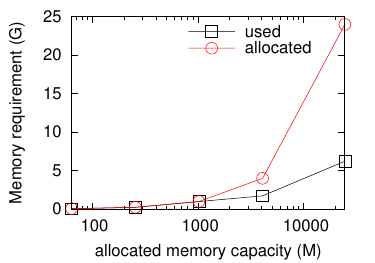}}
	\caption{The illustration of different memory capacity.}
	\label{fig:memory}
\end{figure*}

\subsection{Experiment 4: effect of different memory capacity}

Given SF = 10 and $m = 16$, experiment 4 evaluates the performance of DUD under different allocated memory capacity, from 64MB to 24GB. The buffer size used for sorting is set to one fourth of the allocated memory capacity (i.e., 6GB when the JVM heap is 24GB). Files within this size are loaded into memory, sorted, and written back to disk without intermediate temporary files, while larger files are sorted using TPMMS. This also verifies the usability of DUD under very small memory allocation. 

As illustrated in Figure \ref{fig:memory}(a), the execution time of DUD increases gradually when the allocated memory capacity decreases from 24GB to 64MB. It should be noted that the execution time under different memory capacity actually varies within a small range. This shows the advantages of DUD that it can run on a very small memory capacity and its performance does not seriously degenerate. The gradual increase of the execution time in experiment 4 with smaller memory capacity can be explained by more I/O cost. A smaller memory capacity makes sort operation generate a larger number of sorted partitions and change from in-memory sorting to two-pass multi-way merge sort when file to sort cannot fit in the sorting buffer. The latter incurs a higher I/O cost, as illustrated in Figure \ref{fig:memory}(b). The memory requirement of DUD is illustrated in Figure \ref{fig:memory}(c), the actual memory consumption for DUD always lies in the range the allocated maximum capacity, and DUD can discover UCCs on disk-resident data with a very small memory capacity.

\subsection{Discussion} \label{sec:exp:discussion}

The experimental evaluation in this section helps us understand DUD more comprehensively and clearly. Evidently, the biggest strength for DUD is its ability to deal with large-scale disk-resident data under limited memory, while the existing algorithms cannot achieve this goal. This reflects a trade-off between memory and I/O: DUD avoids maintaining global in-memory structures such as stripped partitions, at the cost of additional disk-based sorting and sequential scans. When PLI-based structures of HPIValid fit within the memory budget, the two algorithms are roughly comparable, with each winning on roughly half the data sets, as shown in Experiment 3. This also validates the beneficial effect of candidate pruning and hash-based batch validation. For DUD, the ideal case is that all UCC candidates contain at least one attribute of FG type. In this case, all candidates are true UCCs definitely and do not need to be validated. The execution of DUD consists of three phases, and the execution time of phase 2 can be negligible. The performance of DUD actually is a trade-off between phase 1 and phase 3, which is adjusted by the different factor selection of $\gamma$ value. The smaller $\gamma$ value leads to a fast execution of phase 1 but maybe a long execution of phase 3. The greater $\gamma$ value leads to the opposite case. The default factor 3 in this paper is empirically shown to be the optimal choice, as described in Section \ref{sec:exp:factorselection}.

The memory consumption of DUD during validation depends on the maximum run length of the NG attribute used for validation, namely the largest number of tuples sharing the same value on that attribute. This quantity is data-dependent and cannot be bounded by a fixed fraction of the relation size in general. For each candidate, DUD uses the NG attribute with the smallest maximum run length among its constituent NG attributes for validation, since NG attributes are visited in ascending order of maximum run length. A nearly constant attribute is therefore not used when another NG attribute of the same candidate yields smaller runs. A pathological case may still arise when all NG attributes in a candidate have large maximum run lengths. In that case, a validation batch can be large, and DUD does not provide a formal worst-case memory bound. In our experiments, the number of tuples maintained in memory during validation is several orders of magnitude smaller than the total tuple count: 50,554 out of 119,994,608 tuples on lineitem SF $= 20$ (Figure \ref{fig:tupnum}(f)).

DUD also has limitations. When the number of attributes is large or the UCC result set is very large, the costs of candidate management and validation may increase substantially, particularly when few attributes are of FG type and the pruning effect is limited. In such cases, if the PLI-based structures of HPIValid still fit within the memory budget, HPIValid may run faster than DUD because it can benefit from in-memory PLI-based validation, whereas DUD still pays the cost of disk-based sorting and sequential scans. Therefore, we do not claim that DUD dominates in-memory PLI-based algorithms on all high-dimensional or large-result-set data sets. Its main advantage lies in memory-constrained settings where maintaining global PLI-based structures becomes difficult, and the pruning by Theorem \ref{theorem:candidatepruning} further contributes when it can substantially reduce validation cost. On data sets with highly imbalanced attribute cardinalities, such as prsa, only a small number of attributes may be classified as FG type, in which case phase 3 validation dominates the running time.

Among the existing UCC discovery algorithms discussed in Section \ref{sec:relatedworks}, GORDIAN, DUCC, and HyUCC mainly rely on global in-memory data structures. Under the single-node memory budget considered in this paper, these algorithms cannot be directly applied once their required structures exceed the available memory. We do not make a direct experimental comparison with these algorithms in this paper. Instead, HPIValid is used as the main baseline because it is the most recent state-of-the-art UCC discovery algorithm and has been reported to substantially outperform these earlier methods \cite{DBLP:journals/pvldb/BirnickBFNPS20}. The comparison here is therefore qualitative. When the required structures fit within the memory budget, these algorithms can avoid some of the disk-based sorting and sequential scans required by DUD. When such structures exceed the available memory, their applicability becomes limited, whereas DUD remains applicable. DUD is most suitable for settings where global in-memory structures cannot be held in memory, and where the pruning by Theorem \ref{theorem:candidatepruning} eliminates a large fraction of candidates from validation.

The disk-resident design of DUD has direct relevance to large-scale data profiling in practice. Data sets in data quality, schema discovery, and data integration tasks are often stored on disk, and available memory may be far smaller than the data size. In such settings, algorithms that require global in-memory structures cannot run. DUD shows that UCC discovery remains feasible under these constraints, at the cost of additional disk I/O. In our experiments, DUD completes on data sets with up to 179.7 million tuples where HPIValid runs out of memory. 

\textit{Open problems}. Several directions remain for future work. A validation strategy with a provable memory bound would address the data-dependent memory behavior discussed above. Improving performance on high-dimensional data sets with very large UCC result sets is also an open problem. Replacing the fixed $\gamma$ factor used in this paper with a strategy that adapts to data distributions is another possibility. Extending DUD to a distributed setting, to scale beyond a single machine, is a further direction. Pipelining sorting and validation is another direction for reducing I/O cost in future implementations, although it would require additional design to stream sorted tuples into validation while preserving the reuse of sorted results across validation batches. A more refined pruning strategy, replacing the binary FG/NG classification with attribute-level ranking or partial difference set generation for NG attributes, could mitigate the validation cost on data sets with highly imbalanced attribute cardinalities such as prsa.

\section{Conclusion} \label{sec:conclusion}

This paper considers the problem of discovering unique column combinations (UCCs) on disk-resident data with limited memory. The existing algorithms need to maintain in-memory data structures for full data, and cannot deal with large-scale data well. This paper proposes a novel DUD algorithm to compute UCCs on large-scale disk-resident data efficiently. The execution of DUD consists of three phases. In phase 1, DUD determines the types of attributes with a low cost, generates full useful difference sets with respect to the attributes of FG type, and supplements difference sets by random sampling. Then DUD constructs the hypergraph, whose vertices are attributes of relation schema and hyperedges correspond to the generated difference sets. The UCC candidates can be generated quickly by enumerating minimal hitting sets of constructed hypergraph in phase 2. An important theorem is proved that the candidates including at least one attribute of FG type are true UCCs without validation, which significantly reduces the number of candidates to be validated in phase 3. For the validation candidates, phase 3 exploits hash-based batch validation strategy. A small number of tuples are maintained at a time to validate a set of candidates simultaneously. Any invalid candidates indicate conflicting tuples, which have equal projections on the candidates. The pairwise comparisons are performed for the conflicting tuples to generate new hyperedges and update the hypergraph. Then, if new hyperedges exist, DUD enters phase 2 to find the new UCC candidates by the similar follow-up processing, with the only difference that the UCC candidates already discovered to be UCCs in the previous operations should be removed. The extensive experimental results, conducted on synthetic and real-life data sets, show that DUD can discover UCCs on disk-resident data with limited memory quickly.

\bibliographystyle{IEEEtran}
\bibliography{references}


\vfill

\end{document}